\definecolor{darkgreen}{rgb}{0,0.5,0}
\definecolor{purple}{rgb}{1,0,1}
\newcommand{\draftnote}[2]{\ifnum\DraftStatus=1
	\marginpar{
		\tiny\raggedright
		\hbadness=10000
        \def\baselinestretch{0.8}
        \textcolor{#1}{\textsf{\hspace{0pt}#2}}}
     \fi}
\newtheorem{thm}{Theorem}[section]
\newtheorem{lem}[thm]{Lemma}
\newtheorem{definition}{Definition}
\begin{document}
%
\title{Differentially Private Query Learning: from Data Publishing to Model Publishing}

\author{\IEEEauthorblockN{Tianqing Zhu}
\IEEEauthorblockA{School of Information Techonolgy\\
Deakin University\\
Burwood, Australia\\
Email: t.zhu@deakin.edu.au}
\and
\IEEEauthorblockN{Ping Xiong}
\IEEEauthorblockA{School of Information and Security Engineering\\
Zhongnan University of Economics and Law\\
Wuhan, China\\
pingxiong@znufe.edu.cn}
\and
\IEEEauthorblockN{Gang Li}
\IEEEauthorblockA{School of Information Techonolgy\\
Deakin University\\
Burwood, Australia\\
Email: gang.li@deakin.edu.au}
\and
\IEEEauthorblockN{Wanlei Zhou}
\IEEEauthorblockA{School of Information Techonolgy\\
Deakin University\\
Burwood, Australia\\
Email: wanlei@deakin.edu.au}
\and
\IEEEauthorblockN{Philip S. Yu}
\IEEEauthorblockA{Department of Computer Science\\
University of Illinois at Chicago\\
Chicago, US\\
Email: psyu@uic.edu}
}


%


\maketitle

\begin{abstract}
With the development of Big Data and cloud data sharing,
privacy preserving data publishing becomes one of the most important topics in the past decade.
As one of the most influential privacy definitions, differential privacy
provides a rigorous and provable privacy guarantee for data publishing.
Differentially private interactive publishing achieves good performance
in many applications;
however,
the curator has to release a large number of queries in a batch or a synthetic dataset in the Big Data era.
To provide accurate non-interactive publishing
results in the constraint of differential privacy,
two challenges need to be tackled:
one is how to decrease the correlation between large sets of queries,
while the other is how to predict on fresh queries.
Neither is easy to solve by the traditional differential privacy mechanism.
This paper transfers the data publishing problem to
a machine learning problem, in which queries are considered as training samples and a prediction model will be released
rather than query results or synthetic datasets.
When the model is published, it can be used to answer current submitted queries and predict results for fresh queries from the public.
Compared with the traditional method,
the proposed prediction model enhances
the accuracy of query results for non-interactive publishing.
Experimental results show that
the proposed solution outperforms traditional
differential privacy in terms of \emph{Mean Absolute Value} on a large group of queries.
This also suggests the learning model
can successfully retain the utility of published queries while preserving privacy.

\end{abstract}


%
\IEEEpeerreviewmaketitle


\section{Introduction}\label{sec-intro}

With advances in Big Data and online services,
Privacy Preserving Data Publishing (PPDP) has attracted substantial attention~\cite{zhu17survey}.
In the past few years,
a number of privacy definitions and mechanisms have been proposed in the literature.
Among them,
the most influential one is the notion of \emph{differential privacy}~\cite{Dwork1866758},
which constitutes a rigorous and provable privacy definition.
Existing work on differential privacy has mainly focused on interactive publishing,
in which the curator releases query answers one by one~\cite{Dwork1873617}.
However,
curators have to releases a large number of queries or a synthetic dataset in may Big Data scenarios such as machine learning or data mining~\cite{Dwork1873617}.
The existing differentially private method fails to provide accurate results when publishing a large number of queries.~\cite{Li2013ICDT}.
This obstacle hinders the implementation of
differential privacy in Big Data applications.

The difficulty in non-interactive data publishing lies on
the high correlation between multiple queries~\cite{Huang2015DB}.
High correlation leads to large volume of noise adding to the query results.
Given a fixed privacy level, noise is determined by the sensitivity.
The sensitivity is defined to capture the
difference in the query results between the addition or removal of a single record in a dataset.
When answering a query set,
a curator has to calibrate the sensitivity of these queries,
but as deleting one record may affect multiple query answers.
In another word,
the queries in the set may correlate to one another.
Correlations between $m$ queries lead to
higher sensitivity (normally $m$ multiplied by the original sensitivity) than independent queries~\cite{Huang2015DB}.
The problem is illustrated in the following example.

\subsection{An Example}\label{sec-example}

Table~\ref{Tab-kbatch} shows a frequency dataset $D$ with $n=4$ variables,
and Table~\ref{Tab-kbatchquery} contains all possible \texttt{range} queries
$F=\{f_1,...,f_{10}\}$.
As changing any value in $D$ will impact on at most $6$ query results
(column containing $x_{2}$ or $x_{3}$ in Table~\ref{Tab-kbatchquery}) in $F$,
the sensitivity of $F$ is $6$,
which is much higher than the sensitivity of a single query.
The noise $Laplace(\frac{6}{\epsilon})$ will be added to every query in $F$.
The sensitivity of $F$ is $O(n^2)$ and
the variance of the noise per answer is $O(n^4/\epsilon^2)$.
When $n$ is large, the utility of the results will be demolished.
Alternatively,
\emph{Laplace} noise can be added to $D$
and the range query results are generated accordingly.
In this situation,
the sensitivity is $1$,
but the privacy budget has to be divided into $10$ pieces and arranged to each query in $F$,
leading even higher noise than the previous method.

This example shows that traditional publishing methods
introduce a large volume of noise and lead to inaccurate results,
while \emph{how to reduce noise} remains a major problem in non-interactive publishing.

\subsection{Challenges and Rationale}

\begin{table}
\centering
  \caption{Dataset $D$\label{Tab-kbatch}}
  \begin{tabular}{ccc}
\toprule
     Grade & Count & Variable  \\
\midrule
     $90-100$ & $12$ & $x_1$ \\
     $80-89$ & $24$ & $x_2$ \\
     $70-79$ & $6$ & $x_3$ \\
     $60-69$ & $7$ & $x_4$ \\
\bottomrule
\end{tabular}
\end{table}

\begin{table}  \centering
  \caption{Range Queries\label{Tab-kbatchquery}}
  \begin{tabular}{c|c|c|c|c|c|c|c}
  \toprule
  Range Query              \\ \midrule
        $f_{1}$ & $x_{1}$ & + & $x_{2}$ & + & $x_{3}$ & + & $x_{4}$  \\ 
        $f_{2}$ & $x_{1}$ & + & $x_{2}$ & + & $x_{3}$ &   &          \\ 
        $f_{3}$ &         & + & $x_{2}$ & + & $x_{3}$ & + & $x_{4}$  \\ 
        $f_{4}$ & $x_{1}$ & + & $x_{2}$ &   &         &   &          \\ 
        $f_{5}$ &         & + & $x_{2}$ & + & $x_{3}$ &   &           \\ 
        $f_{6}$ &         &   &         &   & $x_{3}$ & + & $x_{4}$  \\ 
        $f_{7}$ & $x_{1}$ &   &         &   &         &   &           \\ 
        $f_{8}$ &         &   & $x_{2}$ &   &         &   &          \\ 
        $f_{9}$ &         &   &         &   & $x_{3}$ &   &         \\ 
        $f_{10}$ &        &   &         &   &         &   & $x_{4}$  \\ 
  \bottomrule
\end{tabular}
\end{table}

The above example illustrates that
two challenges need to be tackled in the non-interactive publishing.
\begin{itemize}
\item
How to decrease correlation among queries?
As the correlation among queries will introduce large noise,
and this high volume of noise must be added to every query according to the definition of differential privacy.
We have to decrease the correlation to reduce the introduced noise.

\item
How to deal with unknown fresh queries?
As the curator cannot know what users will ask after the data has been published,
he/she has to consider all possible queries and adds pre-defined noise.
When we meet with the scenarios in Big Data,
it is impossible to list all queries.
Even if the curator is able to list all queries,
this pre-defined noise will dramatically decrease the utility of the publishing results.
\end{itemize}

Several works have been carried out over the last decade to address the first challenge
by a strategy.
Let's use the above example again.
If we only need to answer $f_{7}$ to $f_{10}$,
the sensitivity of the query set is
decreased to $1$ and other query results can be generated by the combination of $f_{7}$ to $f_{10}$.
For example,
the answer of $f_{6}$ can be generated by $f_{9}$ and $f_{10}$.
The solution looks quite simple and effective and
most existing work are following this strategy to decrease the correlation.
Xiao~{et~al.}~\cite{Xiao2011} proposed a wavelet transformation
to decrease the correlation between queries.
Li~{et~al.}~\cite{Li2013ICDT} applied the \emph{Matrix} method
to transform the set of queries into a suitable workload.
Similarly,
Huang~{et~al.}~\cite{Huang2015DB} transformed the query sets into
a set of orthogonal queries.
However,
they can only partly solve the first challenge,
while the second challenge has not been touched by these methods.
For complex queries, such as similarity queries,
it is hard for a curator to figure out all independent queries.
In addition,
when given a fresh new query,
how did the curator generate the result by combining of old queries is still remain unknown.

We observe that these two challenges can be overcome
by transferring the data publishing problem to
a machine learning problem.
We treat the queries as training samples
which are used to generate a prediction \emph{model},
rather than releasing a set of queries or a synthetic dataset.
For the first challenge, correlation between queries,
we will apply limited queries to train the model.
These limited queries have lower correlation than in the original query set.
If we can guarantee the training queries can cover most possible scenarios,
the output model will have higher prediction capability.

For the second challenge,
the model can be used to predict the remaining queries,
including those fresh queries.
Actually,
the model prediction is to generate the combination of training queries.
Consequently,
the quality of the model is determined by two key factors:
the coverage of the training samples
and the prediction capability of the prediction model.
The model can help to answer unlimited number of complex queries.

In the above example,
if we use $f_7$ to $f_{10}$ as the training samples,
the sensitivity will be diminished to $1$ and the added noise will be decreased accordingly.
The prediction model $M$ can be used to answer $f_1$ to $f_6$.
Because the prediction process will not access original dataset $D$,
it will not consume any privacy budget.


The target of this paper is to propose a novel differentially private
publishing method that can
answer all possible queries with acceptable utility.
We make the following contributions:

\begin{itemize}
  \item We propose a novel differentially private data publishing method, \emph{MLDP},
  which successfully transfers the data publishing problem to a machine learning problem,
  solving the current challenges of non-interactive data publishing.
  This method exploits a means of publishing more types of data,
  such as a data model.

  \item We analyze both the privacy and the utility of \emph{MLDP},
  demonstrating that the \emph{MLDP} satisfies $\epsilon$-differential privacy
  and proving the accuracy bound of the \emph{MLDP}.


  \item We use extensive experiments on both real and a simulated dataset
  to prove the effectiveness of the proposed \emph{MLDP}.
  After comparing our method with traditional \emph{Laplace} and other prevalent publishing methods,
  we conclude that the \emph{MLDP} demonstrates better performance when answering a large set of queries.
\end{itemize}

\section{Preliminaries} \label{sec-preliminaries}

\subsection{Notation}


We consider a finite \emph{data universe} $\mathcal{X}$
with the size $|\mathcal{X}|$.
Let $r$ be a \emph{record}
with $d$ attributes sampled from the universe $\mathcal{X}$,
while a \emph{dataset} $D$ is
an unordered set of $n$ records from domain $\mathcal{X}$.
Two datasets $D$ and $D'$ are \emph{neighboring} datasets if
they differ in only one record.
A \emph{query} $f$ is a function that
maps dataset $D$ to an abstract range $\mathbb{R}$:
$f: D\rightarrow\mathbb{R}$.
A group of queries is denoted as $F=\{f_{1},...,f_{m}\}$,
and $F(D)$ denotes $\{f_{1}(D),...,f_{m}(D)\}$.
We use symbol $m$ to denote the number of queries in $F$.

The maximal difference on the results of query $f$
is defined as the \emph{sensitivity} $s$,
which determines how much perturbation is required for the private-preserving answer.
To achieve the target,
differential privacy provides
a mechanism $\mathcal{M}$,
which is a randomized algorithm that
accesses the database.
The randomized output is denoted by a circumflex over the notation.
For example,
$\widehat{f}(D)$ denotes the randomized answer of querying $f$ on $D$.

\subsection{Differential Privacy}

The target of \emph{differential privacy} is to mask the difference in the answer of query $f$
between the \emph{neighboring datasets}~\cite{Dwork1866758}.
In $\epsilon$-differential privacy,
parameter $\epsilon$ is defined as the \emph{privacy budget}~\cite{Dwork1866758},
which controls the privacy guarantee level of mechanism $\mathcal{M}$.
A smaller $\epsilon$ represents a stronger privacy.
The formal definition of differential privacy
is presented as follows:

\begin{definition}[$\epsilon$-Differential Privacy]\label{Def-DP}
A randomized algorithm $\mathcal{M}$ gives $\epsilon$-differential privacy for any pair of
\emph{neighboring datasets} $D$ and $D'$,
and for every set of outcomes $\Omega$,
$\mathcal{M}$ satisfies:
\begin{equation}
Pr[\mathcal{M}(D) \in \Omega] \leq \exp(\epsilon) \cdot Pr[\mathcal{M}(D') \in \Omega]
\end{equation}
\end{definition}

\emph{Sensitivity} is a parameter determining how much
perturbation is required in the mechanism with a given privacy level.

\begin{definition}[Sensitivity]\label{Def-GS}~\cite{Dwork1866758}
For a query $f:D\rightarrow\mathbb{R}$,
the sensitivity of $f$ is defined as
\begin{equation}
s=\max_{D,D'} ||f(D)-f(D')||_{1}
\end{equation}
\end{definition}

The \emph{Laplace} mechanism adds \emph{Laplace} noise to the true answer.
We use $Laplace(\sigma)$ to represent the noise sampled from the \emph{Laplace} distribution with the
scaling $\sigma$.
The mechanism is defined as follows:

\begin{definition}[\emph{Laplace} mechanism]\label{Def-LA}~\cite{Dwork1866758}
Given a function $f: D \rightarrow \mathbb{R}$ over a dataset $D$,
the Eq.~\ref{eq-lap} provides the $\epsilon$-differential privacy.
\begin{equation}
\widehat{f}(D)=f(D)+Laplace(\frac{s}{\epsilon})
\end{equation}\label{eq-lap}
\end{definition}

\section{The Machine Learning Differentially Private Publishing method} \label{sec-method}

\subsection{Overview}

\begin{figure}[htbp]
\centering
{
\includegraphics[scale=0.52]{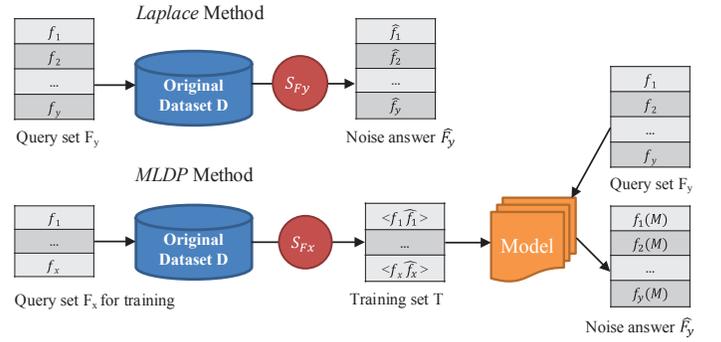}
}
\caption{The overview of the \emph{Laplace }Method and the \emph{MLDP} method}
\label{fig-overview}
\end{figure}

This section presents the implementation of
the Machine Learning Differentially Private (MLDP) publishing method.
For an original dataset $D$,
suppose a set of query $F_{y}=\{f_{1},...,f_{y}\}$ on the $D$ is waiting to be published.
Fig.~\ref{fig-overview} presents the flow of the
traditional \emph{Laplace} method and the \emph{MLDP} method.
The first flow shows the $Laplace$ method.
When the $F_{y}$ is querying on $D$,
the method will measure the sensitivity of the query set $F_{y}$.
To simplify the notation,
we re-write the sensitivity of a group of queries in definition~\ref{def-SG}.
\begin{definition}[Sensitivity for Correlated Queries]\label{def-SG}
Given a group of queries $F=\{f_1,...,f_{m}\}$ over a dataset $D$,
Equation.~\ref{eq-fsen} provides the sensitivity of $F$.
\begin{equation}\label{eq-fsen}
S_{F}=\max_{D,D'}\sum_{i=1}^{m}||f_{i}(D)-f_{i}(D')||_{1}
\end{equation}
\end{definition}
Based on the definition,
the sensitivity of $F_{y}$ is $S_{F_{y}}$.
The \emph{Laplace} noise is calibrated by $Lap(\frac{S_{F_{y}}}{\epsilon})$
and is added to the true answer $F_{y}(D)$.
The \emph{Laplace} method finally outputs the noisy answer $\widehat{F}_{y}$.

The second flow presents the \emph{MLDP} method.
Unlike the traditional \emph{Laplace} method,
it first selects a query set $F_{x}$ for training.
Sensitivity $S_{F_{x}}$ is measured and noise is added to query answers.
Training set $T=<F_{x},\widehat{F_{x}}(D)>$ is carefully selected to make sure $S_{F_{x}}\leq S_{F_{y}}$.
When a learning model is generated,
it will accept the query set $F_{y}$ and make the prediction $F_{y}(M)$.
The \emph{MLDP} method eventually outputs $\widehat{F}_{y}=F_{y}(M)$.

Comparing with the \emph{Laplace} method,
the proposed \emph{MLDP} adds less noise in the training set
than that in the \emph{Laplace} method.
This is because $F_{y}$ normally has correlation with others,
while $F_{x}$ can be selected with lower correlation.
According to Definition~\ref{def-SG},
$S_{F_{x}}$ will be smaller than $S_{F_{y}}$.
A smaller sensitivity leads to less noise.
This helps to solve the first challenge,
high correlation problem in the \emph{Laplace} method.
In addition,
prediction model accepts fresh queries,
which are unknown to the curator before data publishing.
Eventually,
these two properties of \emph{MLDP} help to tackle those two
challenges in traditional \emph{Laplace} method.

\subsection{Implementation of MLDP}

At a high level,
\emph{MLDP} works in three steps:
\begin{itemize}
  \item \emph{Generating training samples}: The curator selects a query set $F_{x}$ with $m$ queries to
  generate training samples.

  \item \emph{Training the model}: The training set is used to train a prediction model $M$.
  In theory, we can select any machine learning algorithm.
  As most of the query answers are numerical values,
  regression algorithms will be more suitable.

  \item \emph{Making prediction}: The model is applied to make prediction of fresh queries $F_{y}$.
  The prediction results $F_{y}(M)$ will be output as the noisy answer of those queries.
\end{itemize}

Algorithm~\ref{Alg-MLDP} illustrates the detail of the \emph{MLDP} method.
At the first step,
we measure the sensitivity of the training sample $F_{x}$.
Because $F_{x}$ is a subset of $F$, $S_{Fx}$ will be smaller than $S_{F}$ and the noise added to $F_{x}$
will be diminished accordingly.
At Step $2$,
\emph{Laplace} noise is added to the true answer $F_{x}(D)$ and we obtain the noisy answer $\widehat{F_{x}}(D)$.
Step $3$ generates the \emph{training set}  $T=<F_{x}, \widehat{F_{x}}(D)>$.
Step $4$ uses $T$ to learn a regression model $M$ and consider it to be a synopsis of the original dataset $D$.
At Step $5$,
model $M$ will be released to the public and
every time public users try to query the dataset $D$,
the answer is predicted by the model $M$.

\begin{algorithm}[tbp]
\caption{MLDP Method}\label{Alg-MLDP}
\begin{algorithmic}[tp]
\REQUIRE $F_{x}$, $F_{y}$ $\epsilon$, $D$.
\ENSURE $\widehat{F}(D)$.
\STATE 1. Measure $S_{F_{x}}$;
\STATE 2. $\widehat{F}_{x}(D)=F_{x}(D)+Laplace(S_{F_{x}}/\epsilon)$;
\STATE 3. Generate training set $T=<F_{x}, \widehat{F}_{x}(D)>$;
\STATE 4. Use regression algorithm or other prediction algorithms to generate model $M$;
\STATE 5. $\widehat{F}_{y}(D)=F_{y}(M)$;
\STATE 6. Output $\widehat{F}(D)$.
\end{algorithmic}
\end{algorithm}

\subsection{Training set Selection}

The performance of the model is affected by two types of errors.
One is \emph{noise error} $E_{N}$,
which is incurred by noise added to the training set.
Another is \emph{model error} $E_{M}$, which is triggered by the inaccuracy of the learning model.
According to the union bound,
the probability of the total error $E_{total}$ can be defined as
\begin{equation}\label{eq-etotal}
  Pr(E_{total})\leq Pr(E_{N})+Pr(E_{M})
\end{equation}

We define two criteria to measure the selection of training set.
One is \emph{independent},
which means how many queries are issued on one variable.
The independent is highly related to the sensitivity:
a high independent training set
leads to lower sensitivity.
Therefore,
when queries in training set are independent to others,
the noise error $E_{N}$ will be lower.

Another criteria is coverage,
which means how many variables can be covered by the training set.
It is obviously that if some variables cannot be covered by the training set,
the model error $E_{M}$ will be very high.
On the other hand,
if one query covers all variable,
the model error $E_{N}$ will still be very high as the model will be less fitting.

Therefore,
training set can be generated by those queries that have maximum coverage to the variables while with minimum
correlations.
Taking the dataset in Table~\ref{Tab-kbatch} as an example,
Table~\ref{Tab-kbatchquery} shows all possible query we can choose in the training set.
Queries ${f_{7}, f_{8}, f_{9}, f_{10}}$ can meet with the criteria.

%
%

\subsection{Training the Model and Making Prediction}

Model $M$ can be trained by various learning algorithms,
for example,
linear combinations of fixed nonlinear functions of the input variables can be used to train a model $M$.
\begin{equation}
F_{y}(M)=w_{0}+\sum_{i=1}^{d-1}w_{j}\phi_{j}(F_{x}),
\end{equation}
where $\phi_{j}(F_{x})$ is the Gaussian basis function, as shown in Equation~\ref{eq-gaussian}.

\begin{equation}\label{eq-gaussian}
\phi_{j}(F_{x})=\exp(-\frac{(\widehat{F}_x(D)-\mu)^2}{2u^2})
\end{equation}
where $\mu$ is the average value of $\widehat{F}_{x}(D)$
and $u$ is a pre-defined parameter to control the scalability of the basis function.
When model $M$ is generated,
queries, including fresh queries answers can be generated by $M$ without consuming
any privacy budget.

In fact,
the model depicts the combination of varies queries answers.
For example,
the linear regression model for range queries is approximate to a histogram publishing.
The parameters in a linear regression model are actually frequencies of a histogram.
But the linear regression model is not working well for the similarity query,
this is the because the combination of similarity queries are more complex comparing to range queries,
while a more sophisticated model is needed in this case, such as the neural network,
SVM models.
The effectiveness of these learning algorithms will be proved in the experiment.

\section{Analysis of the MLDP}

\subsection{Privacy Analysis of the MLDP Method}

According to the definition of differential privacy,
if the data processing follows the requirement
of differential privacy at each step,
the result will satisfy with differential privacy~\cite{Dwork1866758}.

Algorithm~\ref{Alg-MLDP}
shows that the privacy budget is only consumed in Step $2$,
in which \emph{Laplace} noise is added to answers of training queries $F_{x}$.
As the original dataset $D$ is only accessed by $F_{x}$,
following steps, model training and fresh query prediction,
will not disclose any privacy information.
Therefore,
we ensure every step in Algorithm~\ref{Alg-MLDP} satisfies differential privacy.

In addition,
the model $M$ will be published to the public to make prediction.
No matter what type of learning algorithms we choose to train the model,
it contains noise that will not release any privacy information.
Similar to other non-interactive data publishing methods,
once $M$ is published,
the privacy level of the model is fixed by $\epsilon$
that \emph{MLDP} consumes in the training step.
In general,
we have transferred the differentially private \emph{non-interactive data publishing}
problem into a \emph{machine learning} problem with the constraint of differential privacy.

\subsection{Utility Analysis of the MLDP Method} \label{sec-utility}

In this section,
we focus on the relationship between the size of the
training set and the accuracy of the query result.
The accuracy is determined by
a widely used utility notion in \emph{differential privacy}
suggested by Blum~{et~al.}~\cite{Blum2008LTA}:

\begin{definition}[($\alpha$,$\beta$)-useful]\label{DEF-USE}
A mechanism $\mathcal{M}$ is ($\alpha$,$\beta$)-useful
for a set of queries $F$ and a dataset $D$,
if:
with probability $1-\beta$,
for every query $f\in F$,
we have
\begin{equation}
Pr(\max_{f\in F}|\widehat{F}(D)-F(D)|\leq\alpha)\geq 1-\beta,
\end{equation}
then $\alpha$ is the accuracy of the method and $\beta$ is the confidence parameter.
\end{definition}

Equation~\ref{eq-etotal} will help us to analyze the utility of \emph{MLDP}.
Based on the accuracy definition and the relationship between errors,
we will demonstrate that errors of \emph{MLDP} are bounded by a
certain value $\alpha$ with high probability.


The model error $E_{M}$ is associated with the type of query and the learning algorithm.
As \emph{MLDP} does not specify the query type,
we can use the range query as an example.
When a dataset has $n$ records,
the range query will output true values in a range from $0$ to $n$.
For other types of queries,
the difference is only in the range of query answers in the bound.
\emph{MLDP} also does not specify the learning algorithm.
We suppose the learning algorithm we choose has a hypothesis set $H=\{h_{1},...h_{i}\}$
with size $|H|$.
The accuracy bound of model error $E_{M}$ will be estimated by Theorem~\ref{thm1}
and will be proved with the help of the \emph{Chernoff-Hoeffding} bound~\cite{Kasiviswanathan2008531}.

\begin{lem}\label{lem2}
(Real-valued Chernoff-Hoeffding Bound)~\cite{Kasiviswanathan2008531}.
Let $X_{1}, ... , X_{m}$ be independent random variables with $E[X_{i}]=\mu$ and
$a\leq X_{i}\leq b$ for all i,
then for every $\alpha>0$,
\begin{equation}
Pr(|\frac{\sum_{i}X_{i}}{m}|>\alpha)\leq 2\exp(\frac{-2\alpha^2m}{(b-a)^2}).
\end{equation}
\end{lem}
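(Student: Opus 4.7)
The plan is to follow the standard Chernoff--Hoeffding argument: apply the exponential Markov inequality to the centered sum, bound the moment generating function of each bounded random variable using Hoeffding's lemma, exploit independence to factor the MGF, and finally optimize the free parameter. Because the statement is symmetric in the two tails, I would first establish a one-sided bound and then double it via a union bound to recover the factor of $2$.

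First, I would reduce to the mean-zero case by setting $Y_i = X_i - \mu$; the $Y_i$ remain independent and each lies in an interval of length $b-a$. For any $t > 0$, applying Markov to $\exp(t \sum_i Y_i)$ gives $Pr(\sum_i Y_i > m\alpha) \leq \exp(-tm\alpha)\, E[\exp(t \sum_i Y_i)]$, and independence factors the expectation as $\prod_i E[\exp(t Y_i)]$. This reduces the problem to controlling the MGF of a single bounded, mean-zero random variable.

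The central step, and the main obstacle, is Hoeffding's lemma: for a zero-mean $Y$ supported in an interval of length $b-a$, one has $E[\exp(tY)] \leq \exp(t^2(b-a)^2/8)$. I would prove this by using the convexity of $x \mapsto \exp(tx)$ to bound $\exp(tY)$ above by the chord joining the two endpoints of the support, taking expectations to eliminate $Y$, and then showing that the logarithm of the resulting function of $t$ has second derivative at most $(b-a)^2/4$, so that a Taylor expansion about $t = 0$ (where the value and first derivative vanish) yields the claimed sub-Gaussian bound. This is a standard but somewhat delicate convexity computation; everything else is mechanical.

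Finally, combining the pieces gives $Pr(\sum_i Y_i > m\alpha) \leq \exp(-tm\alpha + m t^2 (b-a)^2 / 8)$ for every $t > 0$. Minimizing the exponent in $t$ yields $t^{\ast} = 4\alpha/(b-a)^2$ and the one-sided bound $\exp(-2 m \alpha^2 / (b-a)^2)$. Repeating the argument with $-Y_i$ in place of $Y_i$ controls the opposite tail, and a union bound over the two tails produces the factor of $2$ in the lemma. No further difficulty is anticipated beyond the Hoeffding-lemma step.
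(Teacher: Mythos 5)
Your proposal is the standard and correct proof of Hoeffding's inequality: center the variables, apply the exponential Markov (Chernoff) bound, control the moment generating function of each bounded mean-zero variable via Hoeffding's lemma $E[\exp(tY)]\leq\exp(t^2(b-a)^2/8)$, optimize over $t$ (your $t^{\ast}=4\alpha/(b-a)^2$ and resulting exponent $-2m\alpha^2/(b-a)^2$ are both right), and union-bound the two tails to get the factor of $2$. There is nothing to compare against in the paper: Lemma~\ref{lem2} is imported by citation from Kasiviswanathan et al.\ and no proof is given here, so you have supplied an argument the paper omits entirely. One small but worthwhile remark: as printed, the lemma bounds $Pr(|\sum_i X_i/m|>\alpha)$ with $E[X_i]=\mu$ left arbitrary, which cannot be correct unless $\mu=0$ (the empirical average concentrates around $\mu$, not around $0$). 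Your very first step --- passing to $Y_i=X_i-\mu$ --- means what you actually prove is the bound on $Pr(|\sum_i X_i/m-\mu|>\alpha)$, which is the statement the lemma must intend and the form in which it is used downstream (where the quantities being averaged are absolute errors, hence the relevant $\mu$ is effectively the target of the deviation). So your proof is correct for the corrected statement; it would be worth flagging that the displayed inequality should read $|\frac{\sum_i X_i}{m}-\mu|$.
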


\begin{thm}\label{thm1}
For any query $f\in F$,
all $\beta >0$,
with probability at least $1-\beta$,
the model error $E_{M}$ is bounded by
$\alpha\leq(\frac{n^2\ln(2|H|/\beta)}{2m})^{1/2}$,
where $|H|$ dependents on the learning algorithm we choose for MLDP.
\end{thm}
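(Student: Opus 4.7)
The plan is a standard uniform-convergence argument: first control the deviation of a fixed hypothesis on the $m$ training queries via the real-valued Chernoff-Hoeffding bound of Lemma~\ref{lem2}, then pay a union-bound tax of $|H|$ to make the bound simultaneous over every candidate hypothesis that the learning algorithm could return.

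First I would fix an arbitrary hypothesis $h \in H$ and form the per-sample errors $X_i = |h(f_i) - f_i(D)|$ for $i=1,\dots,m$, where the $f_i$ are the queries in the training set $F_x$. Since we are using range queries on a dataset of $n$ records, each $f_i(D)$ lies in $[0,n]$ and the hypothesis values also lie in $[0,n]$, so each $X_i$ is supported on an interval of length $b-a = n$. The average $\tfrac{1}{m}\sum_i X_i$ is the empirical error, and its expectation is the population (model) error $E_M(h)$. Lemma~\ref{lem2} then yields
\begin{equation}
\Pr\Bigl(\bigl| \tfrac{1}{m}\sum_i X_i - E_M(h) \bigr| > \alpha \Bigr) \le 2\exp\!\Bigl(-\tfrac{2\alpha^2 m}{n^2}\Bigr).
\end{equation}

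Next I would apply the union bound over all hypotheses in $H$, so that
\begin{equation}
\Pr\Bigl(\exists h \in H:\ \bigl| \text{emp}(h) - E_M(h) \bigr| > \alpha \Bigr) \le 2|H|\exp\!\Bigl(-\tfrac{2\alpha^2 m}{n^2}\Bigr).
\end{equation}
Setting the right-hand side equal to $\beta$ and solving for $\alpha$ gives exactly
\begin{equation}
\alpha \;\le\; \sqrt{\tfrac{n^2 \ln(2|H|/\beta)}{2m}},
\end{equation}
which is the claimed bound. Since the learned model $M$ is one specific element of $H$, the statement holds uniformly for any $f \in F$ with probability at least $1-\beta$.

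The only mildly delicate point is justifying the range $[0,n]$ of the per-sample errors: this is what couples the bound to the quantity $n^2$ appearing in the theorem, and it is where the assumption ``range queries on an $n$-record dataset'' is used. For other query classes one would replace $n$ with the appropriate range of $f(D)$, as the theorem statement already anticipates. Everything else is a direct substitution into Lemma~\ref{lem2} followed by a union bound, so no substantive obstacle is expected.
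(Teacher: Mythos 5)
Your proposal is correct and follows essentially the same route as the paper: bound the per-query error for a fixed hypothesis via the real-valued Chernoff--Hoeffding bound of Lemma~\ref{lem2} using the range $[0,n]$, take a union bound over the $|H|$ hypotheses, and set the resulting tail probability equal to $\beta$ to solve for $\alpha$. If anything, your reading is slightly more careful than the paper's, since you explicitly treat $E_M$ as the deviation of the empirical average from its expectation (the form in which Hoeffding actually applies), whereas the paper applies the lemma to the raw average directly.
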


\begin{proof}
As
$E_{M}=\frac{\sum_{i}|f_{i}(D)-f(M)|}{m}$,
we have $$Pr(E_{M})=Pr(\frac{\sum_{i}|f_{i}(D)-f(M)|}{m})$$
and $$Pr(E_{M}>\alpha)=Pr(\frac{sum_{i}|f_{i}(D)-f(M)|}{m}>\alpha).$$
As range query result is a true value with the maximum value of $n$ and the
minimum value of $0$,
according to Lemma~\ref{lem2},
for every hypothesis $h\in H$, we have
$$Pr(\frac{sum_{i}|f_{i}(D)-f(M)}{m}|>\alpha)\leq 2\exp(-\frac{2m\alpha^2}{n^2}).$$
For all hypotheses, we then have
$$Pr(\frac{sum_{i}|f_{i}(D)-f(M)}{m}|>\alpha)\leq 2|H|\exp(-\frac{2m\alpha^2}{n^2})$$
Let
$\beta=2|H|\exp(-\frac{2m\alpha^2}{n^2})$,
We have
$\alpha\leq(\frac{n^2\ln(2|H|/\beta)}{2m})^{1/2}$.
\end{proof}

The noise error $E_{N}$ is independent to the
query type and the learning algorithm,
and can be analyzed by the property of \emph{Laplace} noise,
which is presented by sums of \emph{Laplace} random variables in Lemma~\ref{lem1}.
\begin{lem}\label{lem1}
(Sums of Laplace Random Variables)~\cite{Kasiviswanathan2008531}.
Let $\lambda_{1}, ... , \lambda_{m}$
be a set of independent random variables drawn from Laplace($\sigma$),
then for every $\alpha>0$,
\begin{equation}
Pr(|\frac{\sum \lambda_{i}}{m}|>\alpha)=\exp(-\frac{m\alpha^2}{4\sigma}).
\end{equation}
\end{lem}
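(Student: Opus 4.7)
The plan is to deploy the standard Chernoff--Cram\'er technique on the sum $S_m := \sum_{i=1}^m \lambda_i$, leveraging the closed-form moment generating function (MGF) of the Laplace distribution together with a symmetrization over the two tails. The first ingredient I would record is that a mean-zero Laplace$(\sigma)$ variable satisfies $E[e^{t\lambda_i}] = 1/(1-\sigma^2 t^2)$ on the interval $|t| < 1/\sigma$, so that by independence $E[e^{t S_m}] = (1-\sigma^2 t^2)^{-m}$ on the same interval.

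Next I would apply Markov's inequality to $e^{t S_m}$ to obtain, for every admissible $t > 0$,
\begin{equation}
Pr\!\left(\frac{S_m}{m} > \alpha\right) \leq (1-\sigma^2 t^2)^{-m} e^{-t m \alpha},
\end{equation}
and then control the log-MGF via the elementary inequality $-\ln(1-x) \leq 2x$ valid on $0 \leq x \leq 1/2$, producing the sub-Gaussian surrogate $E[e^{t S_m}] \leq \exp(2 m \sigma^2 t^2)$ on the restricted domain $|t| \leq 1/(2\sigma)$. Plugging this into the Chernoff bound and optimizing with $t^{\star} = \alpha/(4\sigma^2)$ (which is admissible whenever $\alpha \leq 2\sigma$) yields the one-sided inequality $Pr(S_m/m > \alpha) \leq \exp(-m\alpha^2/(8\sigma^2))$. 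By the symmetry of the Laplace density the identical bound holds for $Pr(S_m/m < -\alpha)$, and a union bound over the two tails then delivers the two-sided statement of the form claimed in Lemma~\ref{lem1}, up to the constant in the exponent.

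The main obstacle is respecting the admissibility constraint $|t| < 1/\sigma$ governing the MGF's radius of convergence, since the Chernoff optimizer can violate it once $\alpha$ exceeds $2\sigma$. For such large deviations I would fall back on a sub-exponential Bernstein-type bound of order $\exp(-m\alpha/(c\sigma))$, and present the final result as the minimum of the sub-Gaussian and sub-exponential regimes. A secondary subtlety is reconciling the exact constant in the exponent with the stated form: the bound in the lemma carries $4\sigma$ in the denominator, which is dimensionally inconsistent for a scale parameter, so I would carefully track the convention for $\sigma$ (variance $2\sigma^2$ versus scale $\sigma$) through the Chernoff optimization and cross-check against the cited source before wiring the lemma into the utility analysis of \emph{MLDP}.
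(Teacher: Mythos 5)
The paper offers no proof of this lemma at all---it is imported by citation from Kasiviswanathan et al.---so there is no in-paper argument to measure yours against; what follows compares your proposal to the statement itself. Your Chernoff--Cram\'er route is the standard and correct way to obtain a bound of this shape: the MGF identity $E[e^{t\lambda_i}]=(1-\sigma^2t^2)^{-1}$ for $|t|<1/\sigma$, the surrogate $E[e^{tS_m}]\leq\exp(2m\sigma^2t^2)$ on the restricted domain, the optimizer $t^{\star}=\alpha/(4\sigma^2)$ with admissibility condition $\alpha\leq 2\sigma$, the resulting one-sided bound $\exp(-m\alpha^2/(8\sigma^2))$, and the symmetrization plus union bound are all correct, as is your recognition that large deviations ($\alpha\gtrsim\sigma$) force a separate sub-exponential branch of order $\exp(-cm\alpha/\sigma)$. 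You are also right to distrust the statement as printed, and your ``secondary subtlety'' is in fact the main point: the lemma cannot hold with an equality sign (a Laplace tail is not exactly Gaussian-shaped, and the two-sided probability needs a factor of $2$ or a $\leq$); the denominator $4\sigma$ is dimensionally inconsistent and should be $4\sigma^2$ (the cited source states the sub-Gaussian form with $\sigma^2$ and with an explicit restriction on the range of $\alpha$); and without that restriction the claimed decay is simply false for large $\alpha$. These defects are not cosmetic---they propagate into Theorem~\ref{thm2}, whose exponent carries $S_{F}$ where consistency with $Laplace(S_{T}/\epsilon)$ noise would require $S_{F}^{2}$, so the final accuracy bound $\alpha=(4S_{T}\ln(|H|/\beta)/(m\epsilon^{2}))^{1/2}$ should read $S_{T}^{2}$ in the numerator. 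In short, your proof is sound and establishes the corrected version of the lemma; the version printed in the paper is not literally provable as stated.
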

In the \emph{MLDP} training set,
the noise $\lambda_{i}$ is derived from $Laplace(\frac{S_{T}}{\epsilon})$.
Theorem~\ref{thm2} shows the bound of the $E_{N}$.

\begin{thm}\label{thm2}
For any query $f\in F$,
all $\beta >0$,
with probability at least $1-\beta$,
the noise error $E_{N}$ of \emph{MLDP} is
bounded by
$\alpha=(\frac{4S_{T}\ln(|H|/\beta)}{m\epsilon^2})^{1/2}$,
where $|H|$ dependents on the learning algorithm we choose for the \emph{MLDP}.
\end{thm}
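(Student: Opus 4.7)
The plan is to mirror the argument in the proof of Theorem 1, but to replace the Chernoff--Hoeffding inequality for bounded random variables with Lemma 2, which is the analogous concentration statement for sums of Laplace random variables. The source of randomness here is different from that in Theorem 1: in the noise-error case it is the injected Laplace perturbation itself, not the query evaluations on the data, so Lemma 2 is the natural tool.

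First, I would identify $E_N$ with the magnitude of an empirical mean of independent Laplace draws. Step 2 of Algorithm 1 sets $\widehat{f_i}(D) = f_i(D) + \lambda_i$ with each $\lambda_i$ drawn independently from $\mathrm{Laplace}(S_T/\epsilon)$, so the contribution of the injected noise to the averaged deviation over the $m$ training queries is exactly $\tfrac{1}{m}\sum_{i=1}^m \lambda_i$. This is precisely the object whose tail Lemma 2 controls.

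Next I would apply Lemma 2 with scaling parameter $\sigma = S_T/\epsilon$ to obtain a pointwise tail bound of the form $\Pr(|\tfrac{1}{m}\sum_i \lambda_i| > \alpha) \le \exp(-c\,m\alpha^2\epsilon^2/S_T)$ for the appropriate constant $c$, and then extend this pointwise statement to one that is uniform over the hypothesis class $H$ by taking a union bound exactly as in the proof of Theorem 1; this introduces the $|H|$ prefactor. Finally, setting the resulting failure probability equal to $\beta$ and inverting the resulting equation for $\alpha$ recovers the claimed bound $\alpha = \bigl(\tfrac{4 S_T \ln(|H|/\beta)}{m\epsilon^2}\bigr)^{1/2}$ after routine algebraic manipulation.

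The main obstacle is conceptual rather than computational: one must be careful to apply the union bound \emph{before} conditioning on the particular $h \in H$ that the regression step ultimately returns, since the learner's choice is itself a function of the noisy training set $T$. As in Theorem 1, handling this correctly requires bounding the noise error for a generic fixed hypothesis and only then paying the $|H|$ factor; once that is granted, no further probabilistic machinery is needed beyond Lemma 2, and the remainder of the proof is a one-line substitution.
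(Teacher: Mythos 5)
Your proposal follows essentially the same route as the paper's own proof: identify the noise error with the averaged sum of the injected Laplace variables, apply the sums-of-Laplace-random-variables lemma with scale $S_{T}/\epsilon$, multiply the tail bound by $|H|$ to cover all hypotheses, and set the result equal to $\beta$ to solve for $\alpha$. The only addition is your (sensible) remark about ordering the union bound before conditioning on the learned hypothesis, which the paper glosses over but which does not change the argument.
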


\begin{proof}
As
$E_{N}=\frac{\sum_{i}|\widehat{f}_{i}(D)-f_{i}(D)|}{m}$,
we have $$Pr(E_{N})=Pr(\frac{\sum_{i}|\widehat{f}_{i}(D)-f_{i}(D)}{m}|).$$
For each $f_{i}\in F$, \emph{MLDP} adds a random variable $\lambda_{i}$
as noise drawn from $Laplace(\frac{S_{F}}{\epsilon})$.
When $\alpha>0$,
we need to bound $Pr(E_{N}>\alpha)=Pr(|\frac{\sum\lambda_{i}}{m}|\geq\alpha)$.
According to Lemma 2,
we have
$$Pr(|\frac{\sum\lambda_{i}}{m}|\geq\alpha)=\exp(-\frac{m\alpha^2\epsilon^2}{4S_{F}}).$$
For all hypotheses $h\in H$, we then have
$$Pr(|\frac{\sum\lambda_{i}}{m}|\geq\alpha)=|H|\exp(-\frac{m\alpha^2\epsilon^2}{4S_{F}}).$$
Let
$\beta=|H|\exp(-\frac{m\alpha^2\epsilon^2}{4S_{F}})$,
we have
$\alpha=(\frac{4S_{F}\ln(|H|/\beta)}{m\epsilon^2})^{1/2}$.

\end{proof}

Theorem~\ref{thm1} illustrates the relationship between the accuracy and
the size of the training set.
The bound on $\alpha$ indicates that
a larger size of training set $m$ results in a lower $E_{M}$ and higher accuracy.
Theorem~\ref{thm2} shows that the accuracy is also associated with
the sensitivities $S_{T}$.
A larger $S_{T}$ leads to larger $E_{N}$ and lower accuracy.
As $S_{T}$ is dominated by the correlations between queries in the training set,
a larger $m$ will increase $S_{T}$.
Although a large $m$ increases the accuracy according to Theorem~\ref{thm1},
$S_{T}$ will be enlarged at the same
time which will impede the enhancing of the accuracy.
These results assist in selecting the training set of the \emph{MLDP}.

To retain acceptable accuracy,
we cannot use all possible queries to train the model.
At first glance,
it seems that if we can list all possible queries as the training set,
we can produce a very accurate model to make the prediction.
However, according to Theorem~\ref{thm2},
even though a large training set results in less model error $E_{M}$,
a large training set contains huge number of correlated queries,
which leads to higher sensitivity and large noise error $E_{N}$.
The total error $E_{total}$ will be quite large and the accuracy will be reduced.

We cannot use only the uncorrelated queries as the training set.
In this case,
$E_{N}$ can be quite small due to the smaller sensitivity and lower volume of noise;
however;
as uncorrelated queries are only a small proportion of the entire number of queries,
the model will be very inaccurate and $E_{M}$ will be very large.
Both cases indicate that we must consider two types of error
to achieve better performance.

\subsubsection{Advantage of MLDP}

There are several advantages to using a machine learning method to
deal with the non-interactive publishing problem:
\begin{itemize}
\item
Many machine learning models can be applied to the data publishing problem.
For example, we can use \emph{linear regression},
\emph{SVM for regression} and \emph{neural network} to learn a
suitable model $M$ according to the training set.
As prediction is a mature area that has been investigated in machine learning for several decades,
we can choose sophisticated technologies and adjust parameters to obtain a better performance.

\item Some existing methods can be considered as an extension of \emph{MLDP}.
For example,
the Private Multiplicative Weights (PMW) Mechanism~\cite{Hardt201061}
is one of the most prevalent publishing methods in differential privacy.
To some extent,
it can be considered as an instance of the \emph{MLDP} method.
In the \emph{PMW},
the histogram is a selected model and frequencies in this histogram
constitute parameters of the model.
The model (histogram) is trained by the input queries until
it converges or meets the halting criteria.
Compared to MLDP, however,
\emph{PMW} can only answer queries in the training set.

\item
The noisy model naturally has the property of generalization.
Generalization is an essential problem in machine learning, but the differential privacy mechanism
has proven that it can avoid over-fitting in the learning process~\cite{Dwork2015validation}.


\end{itemize}

\subsection{Differences between MLDP and Private Learning}

The proposed \emph{MLDP} method is highly related to machine learning algorithms,
but is different to previous private learning.
It is a method that introduces noise into the original learning algorithms,
so that the privacy of the training dataset can be preserved in the learning process~\cite{Kasiviswanathan2008531}.
First, the purpose of publishing models is different.
\emph{MLDP} aim to publish a model for fresh query prediction,
whereas private learning is only used for traditional machine learning
tasks and will not preserve the privacy of fresh samples.

Second,
\emph{MLDP} considers pre-defined queries as a training set
while private learning considers records in the original dataset as traninig samples.
The target of differential privacy is to hide the true value of query answers,
not the records,
so \emph{MLDP} considers the query as the training sample and
the model is used to predict query answers rather than the values of records.
In this respect,
\emph{MLDP} is totally different from private learning algorithms.
Even though Kasiviswanathan~\cite{Kasiviswanathan2008531}
proved that Kearn's statistical query (SQ)~\cite{Kearns1998}
model can be implemented in a differentially private manner,
the training set still comprises records in the dataset.
Consequently,
the \emph{SQ} model is similar to private learning, not \emph{MLDP}.

Finally,
as public users normally use count, average or sum query,
\emph{MLDP} normally applies regression algorithms for true value prediction,
while the private learning algorithm is usually specific to
classification with labels of $0$ or $1$.
Table~\ref{tab-difference} summarizes the major
differences between \emph{MLDP} and private learning.

\begin{table*}  \centering
  \caption{The difference between the MLDP and Private Learning}\label{tab-difference}
  \begin{tabular}{p{1.5in}|p{2.2in}|p{2.2in}}
\toprule
    & \emph{MLDP} & Private Learning \\
\midrule
    Model
    & The model is used to predict fresh query answers for public users
    & The model is used for traditional machine learning \\\midrule
    Training set
    & Queries on the dataset
    & Records in the dataset \\\midrule
    Protect Target
    & Preserve the privacy of all queries.
    & Do not protect future samples \\\midrule
    Learning Algorithms
    & Prediction
    & Classification \\\midrule
\bottomrule
\end{tabular}
\end{table*}

\section{Experiment and Analysis} \label{sec-experiment}
%
%
%
%
%
%
%
%

\subsection{Experiment Configuration}
The experiments involve four datasets.
Three are derived from Hay's work~\cite{hay2010boosting},
and have been widely used in the differentially private publishing tests.

\begin{itemize}
\item\texttt{NetTrace}:
this dataset contains the IP-level network trace at a border gateway of a university.
Each record reports the number of external hosts connected to an internal host.
There are $65,536$ records with the number of connections ranging from $1$ to $1423$.

\item\texttt{Search Logs}:
this synthetic dataset was generated by
interpolating \emph{Google Trends} data and \emph{America Online} search logs.
It contains $32,768$ search records collected
between \emph{Jan. 1, 2004} and \emph{Aug. 9, 2009}.

\item\texttt{Social Network}:
this dataset records the friendship relations among $11,000$ students,
sampled from an online social network website.
There were $32,768$ students,
each of which had at most $1678$ friends.

\item\texttt{Netflix}
The \emph{Netflix} dataset is extracted from the \emph{Netflix} Prize dataset,
where each user rated at least $20$ movies,
and each movie was rated by $20-250$ users.
This dataset is used to test the similarity query.

\item\texttt{Simulated Histogram}:
this simulated histogram contains $10$ bins with random numbers.
The target of  the simulated histogram is to list all
possible queries.
For a histogram with $10$ queries, the range query number is $1023$.
We will test all the queries in the experiment.

\item\texttt{Netflix}:
the \emph{Netflix} dataset is extracted from the \emph{Netflix} Prize dataset,
where each user rated at least $20$ movies,
and each movie was rated by $20-250$ users.
This dataset is used to test similarity queries.
\end{itemize}

We select different learning algorithms, including linear regression,
neutral network, ensemble bag, boost, and SVM, to create prediction model.
Two types of queries will be tested based on the prediction model:
\emph{range query} and \emph{similarity query}.
Range query is normally presented as \emph{how many users are there in the dataset with the age from 20 to 40?}
Range query is actually a count related query, which count how many record that meet with a specified property.
Similarity query is a type of complex query that measures the similarity between two records.
For example, \emph{what is the similarity between those two users in terms of their preference on movies?}
For both range and similarity queries,
>>>>>>> .r2804
we generated a training set with $m=1,000$ random queries $F$.
These range queries are correlated to one another and the sensitivity
is measured by definition~\ref{def-SG}.
The accuracy of results was measured by \emph{Mean Absolute Error} (MAE).
\begin{equation}\label{eq-mse}
  MAE=\frac{1}{m}\sum_{F_{i}\in F}|\widehat{F}_{i}(D)-F_{i}(D)|
\end{equation}
A lower \emph{MAE} implies better utility.

\subsection{Model Selection}
To select the model that fits for a particular type of query,
several algorithms have been tested with $\epsilon=1$,
including regression, SVM, neural network, bagging, and boosting.
Figure~\ref{FIG-Model} shows the results on Social Network and Netflix datasets.
Figure~\ref{Fig-Social-Model} illustrates that regression algorithm outperforms other
algorithms in terms of the range query.
When the size of the training set increases,
the MAE of the regression decreases dramatically.
When the size of the training set increases to more than $400$,
MAEs of all algorithm are stable.
These mean that the regression algorithm can be fully trained in limited training samples,
while others need more training samples.

However, for similarity queries,
the performance of regression is worse than others.
As shown in  Figure~\ref{Fig-Netflix-Model},
the SVM algorithm has the lowest MAE comparing with other algorithms.
This is because the similarity query
needs complex combination that cannot be deduced easily.
The SVM algorithm is more suitable to simulate the combination of queries.
Due to limited space,
we only used range query and regression model as examples in following set of experiments.

\begin{figure}[htbp]
\centering
\subfloat[\texttt{Range query}]{
\label{Fig-Social-Model}
\includegraphics[scale=0.23]{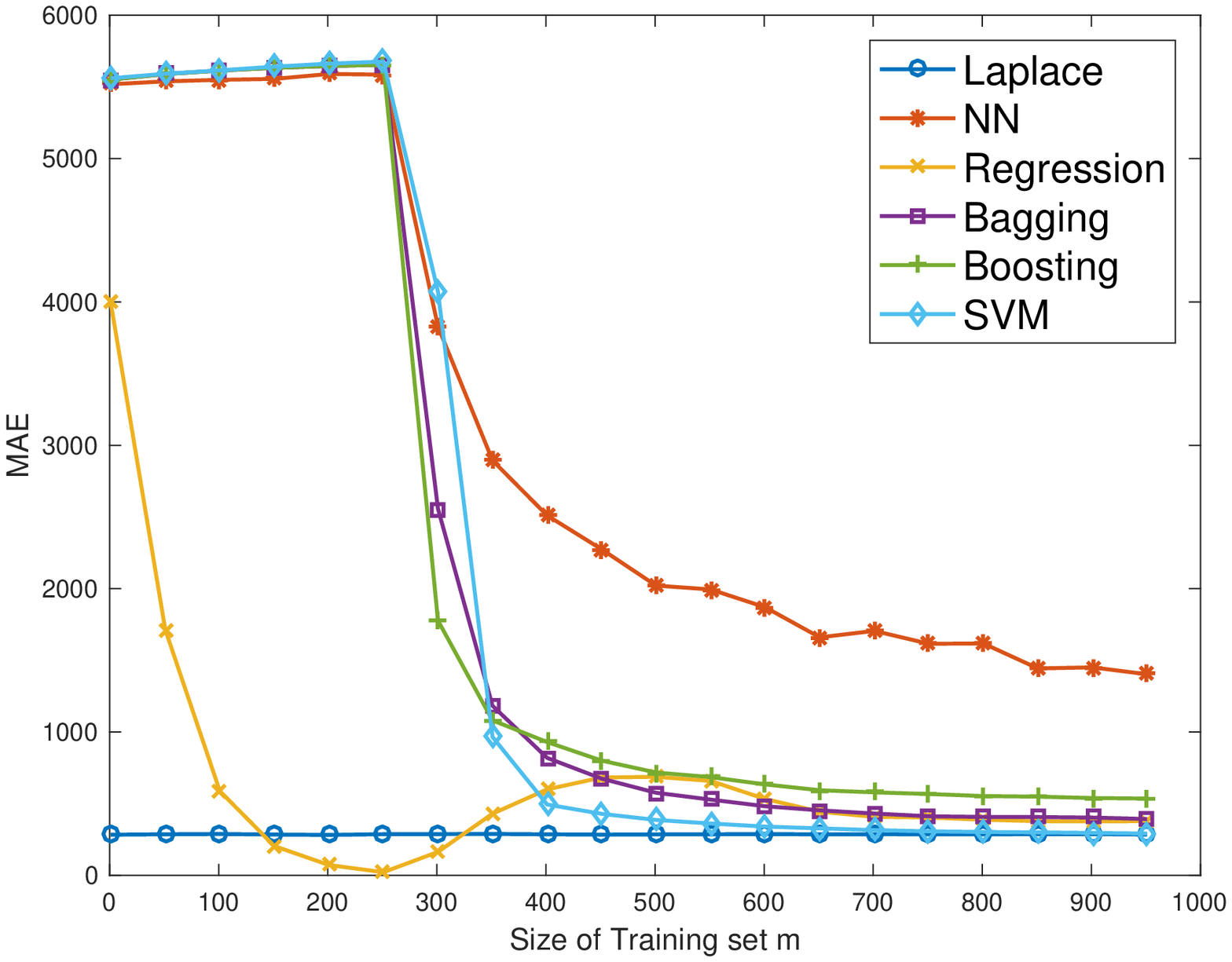}
}
\subfloat[\texttt{Similarity query}]{
\label{Fig-Netflix-Model}
\includegraphics[scale=0.23]{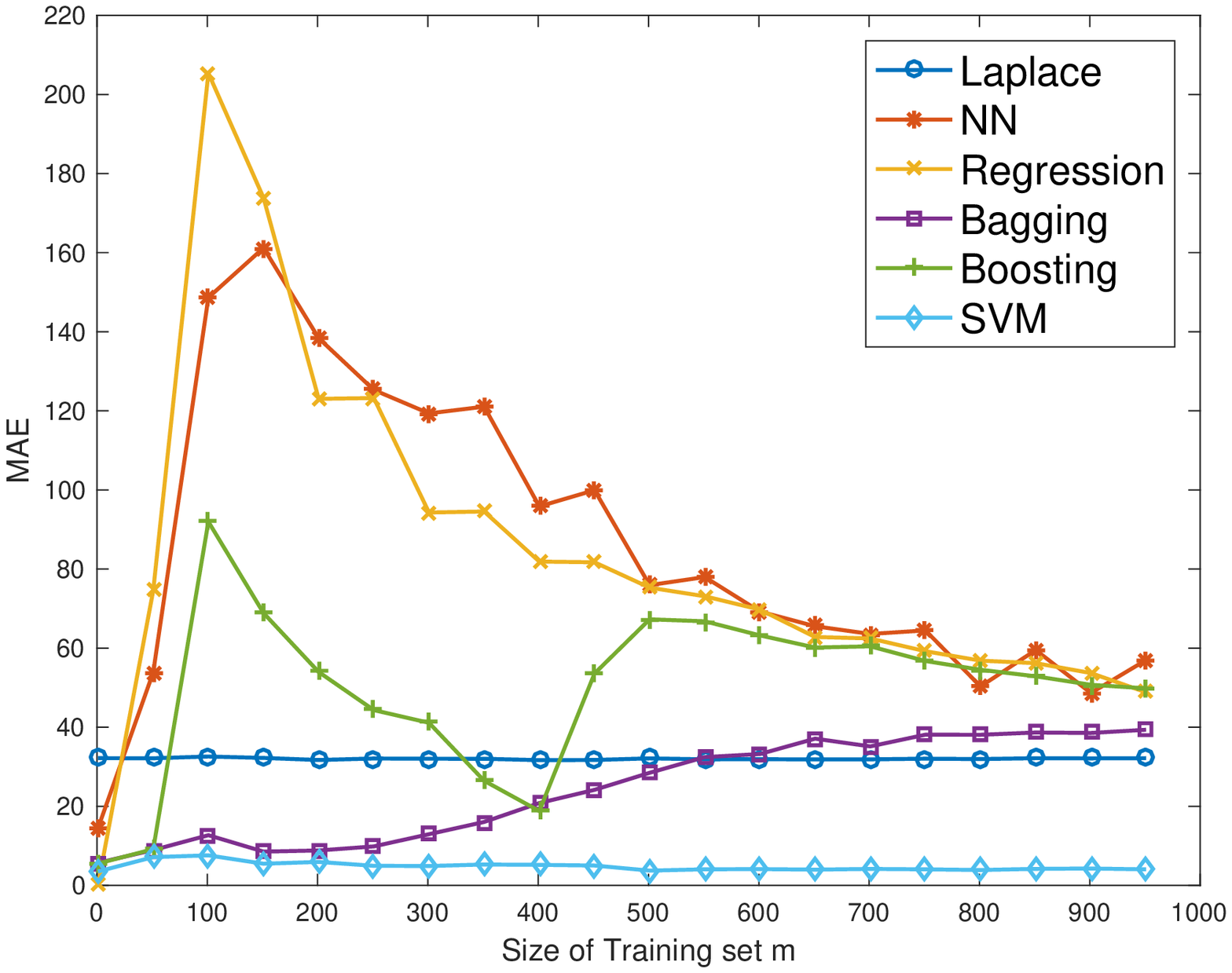}
}
\caption{Model selection on two different type of queries}
\label{FIG-Model}
\end{figure}

\subsection{Performance of MLDP with Different Sizes of Training Set}\label{sub-train}

\begin{figure*}[htbp]
\centering
\subfloat[\texttt{Nettrace}]{
\label{Fig-Nettrace-train-100test}
\includegraphics[scale=0.29]{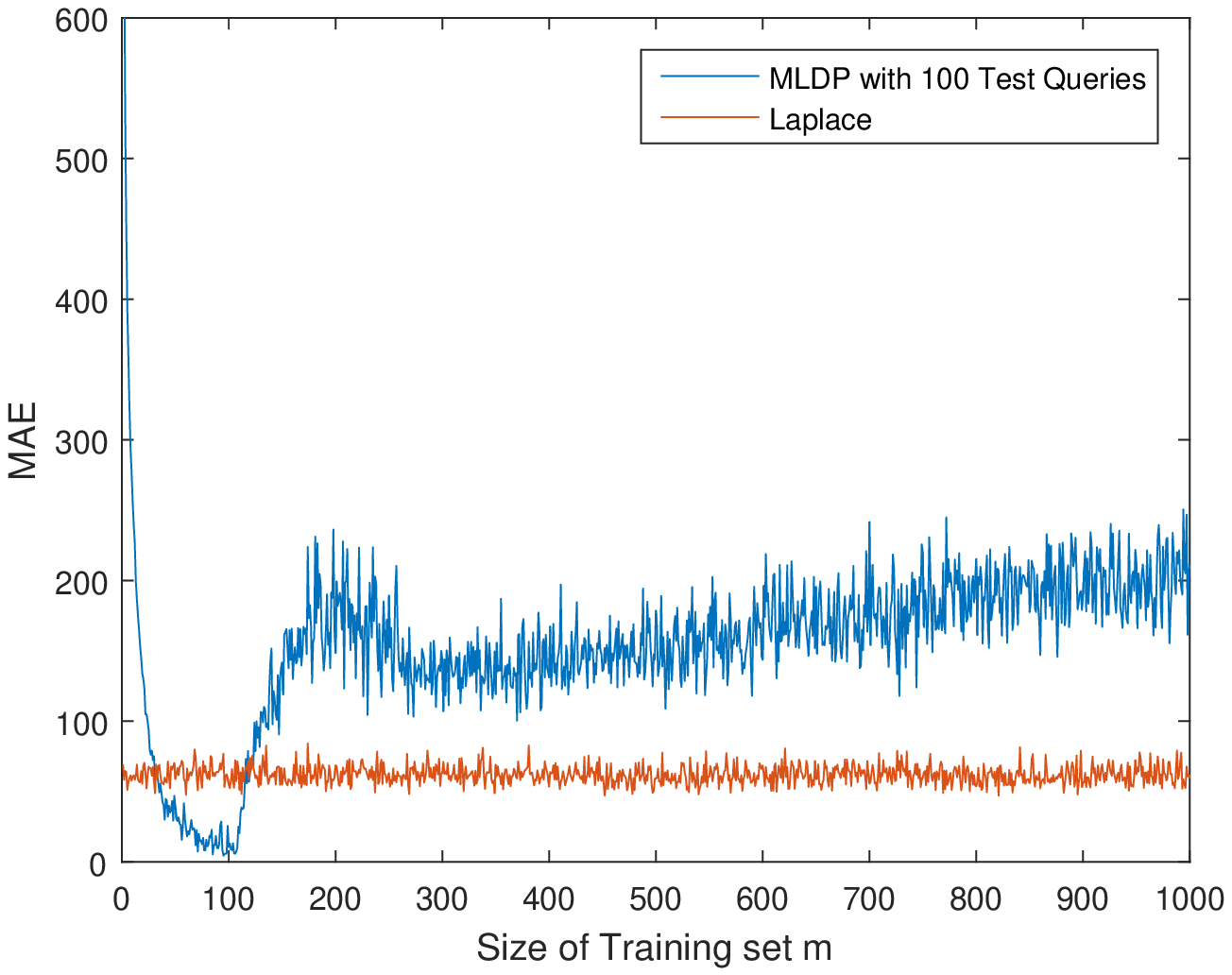}
}
\subfloat[\texttt{Nettrace}]{
\label{Fig-Nettrace-train-500test}
\includegraphics[scale=0.29]{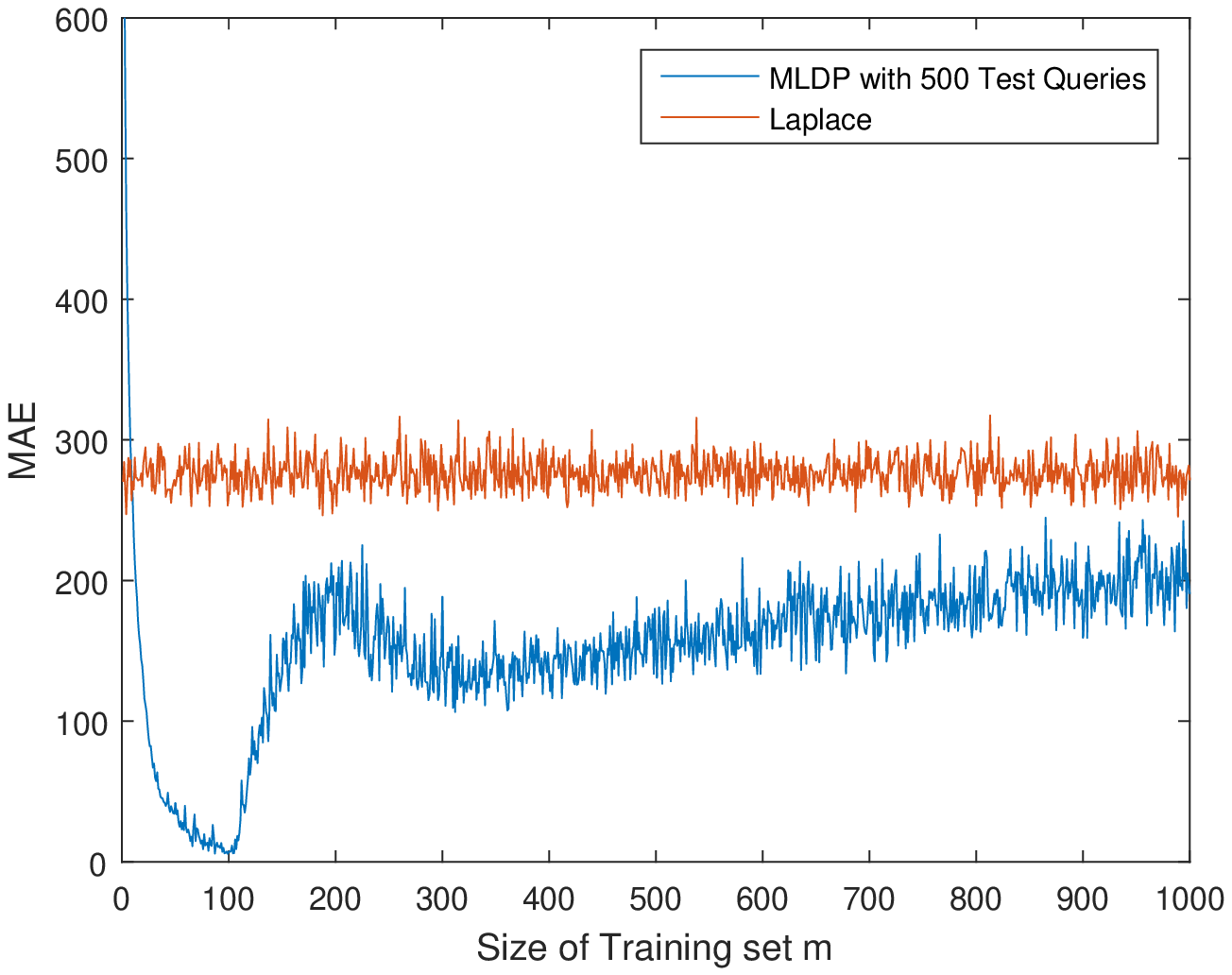}
}
\subfloat[\texttt{Search Log}]{
\label{Fig-Search-train-100test}
\includegraphics[scale=0.29]{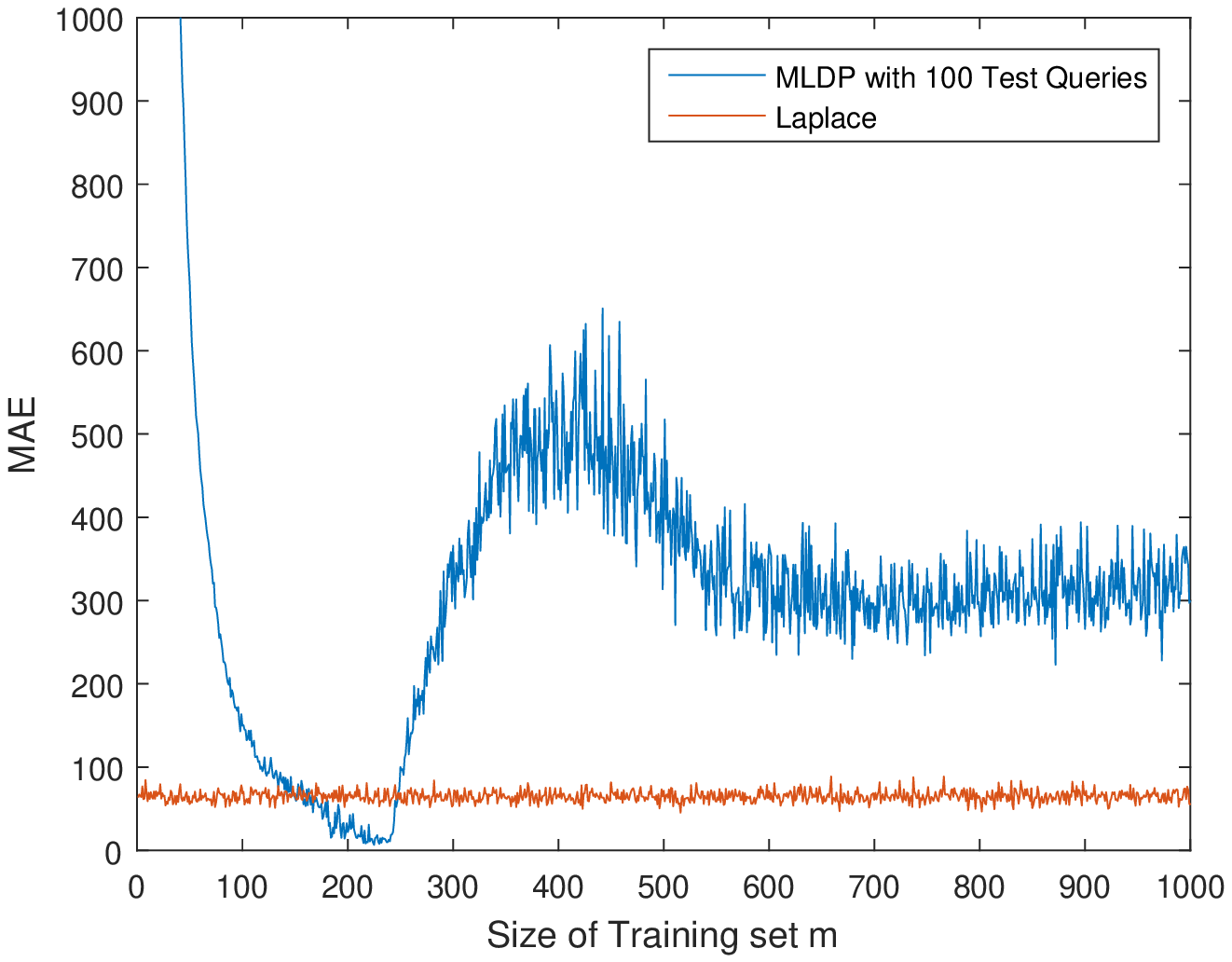}
}
\subfloat[\texttt{Search Log}]{
\label{Fig-Search-train-500test}
\includegraphics[scale=0.29]{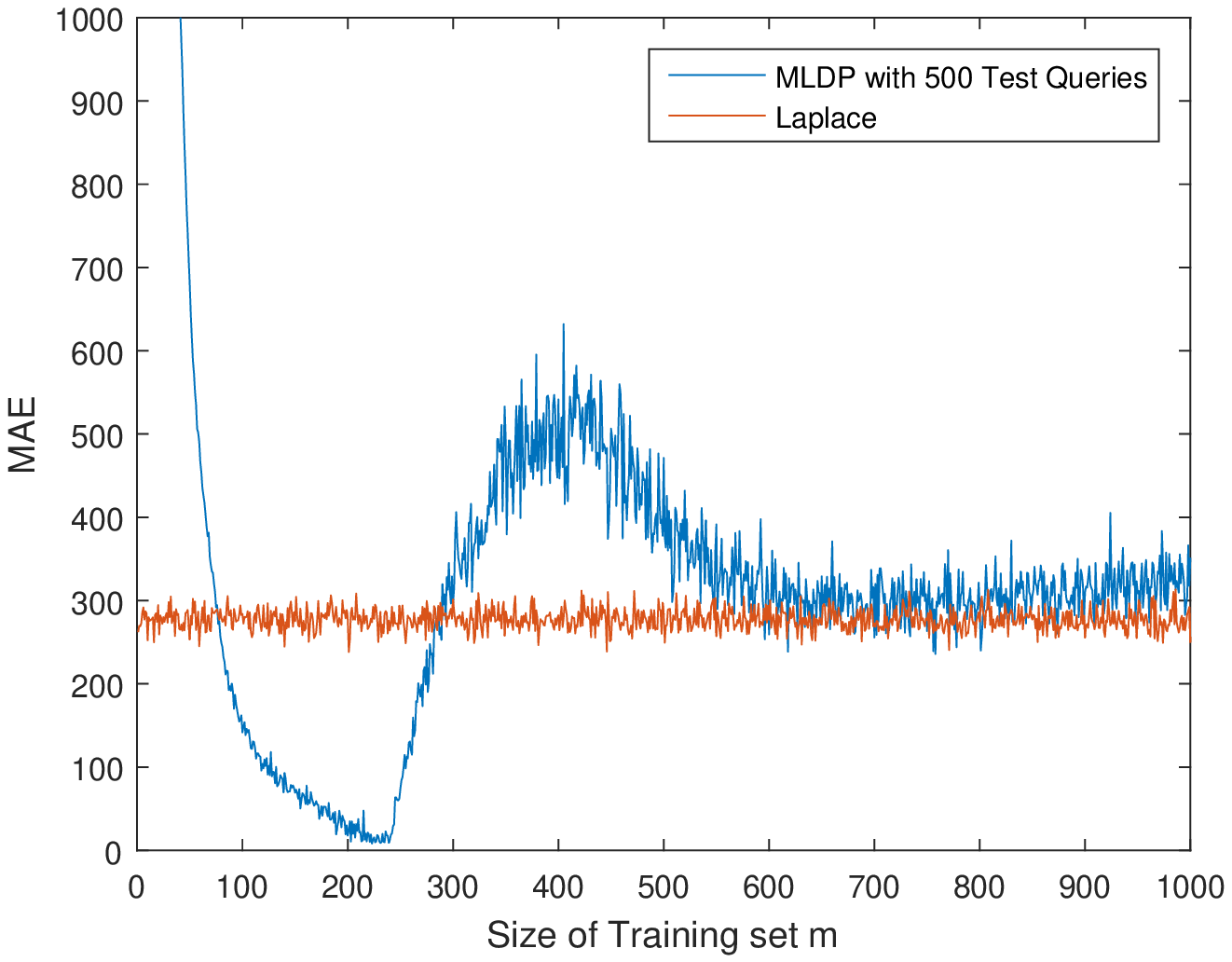}
}

\subfloat[\texttt{Social Network}]{
\label{Fig-Social-train-100test}
\includegraphics[scale=0.29]{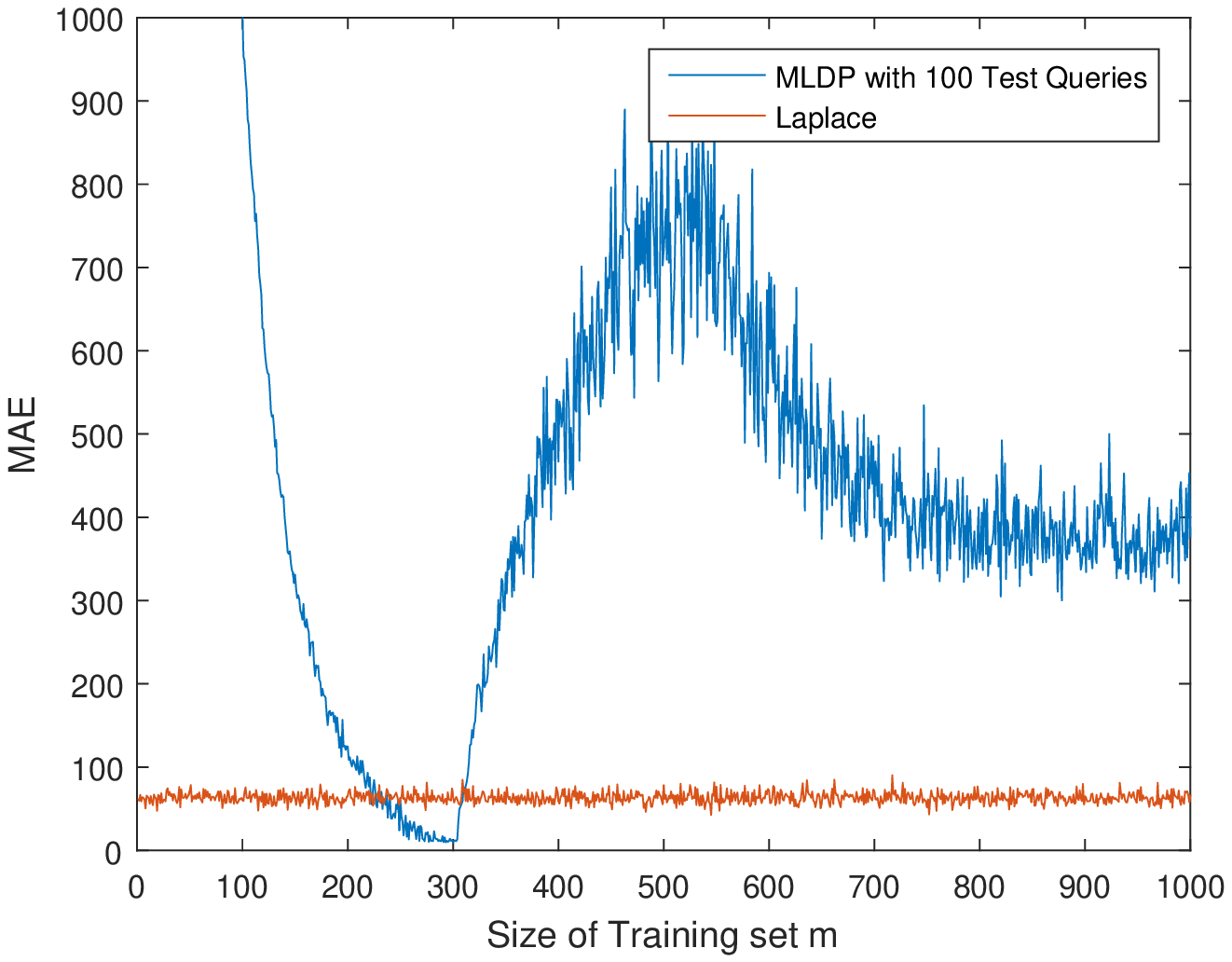}
}
\subfloat[\texttt{Social Network}]{
\label{Fig-Social-train-500test}
\includegraphics[scale=0.29]{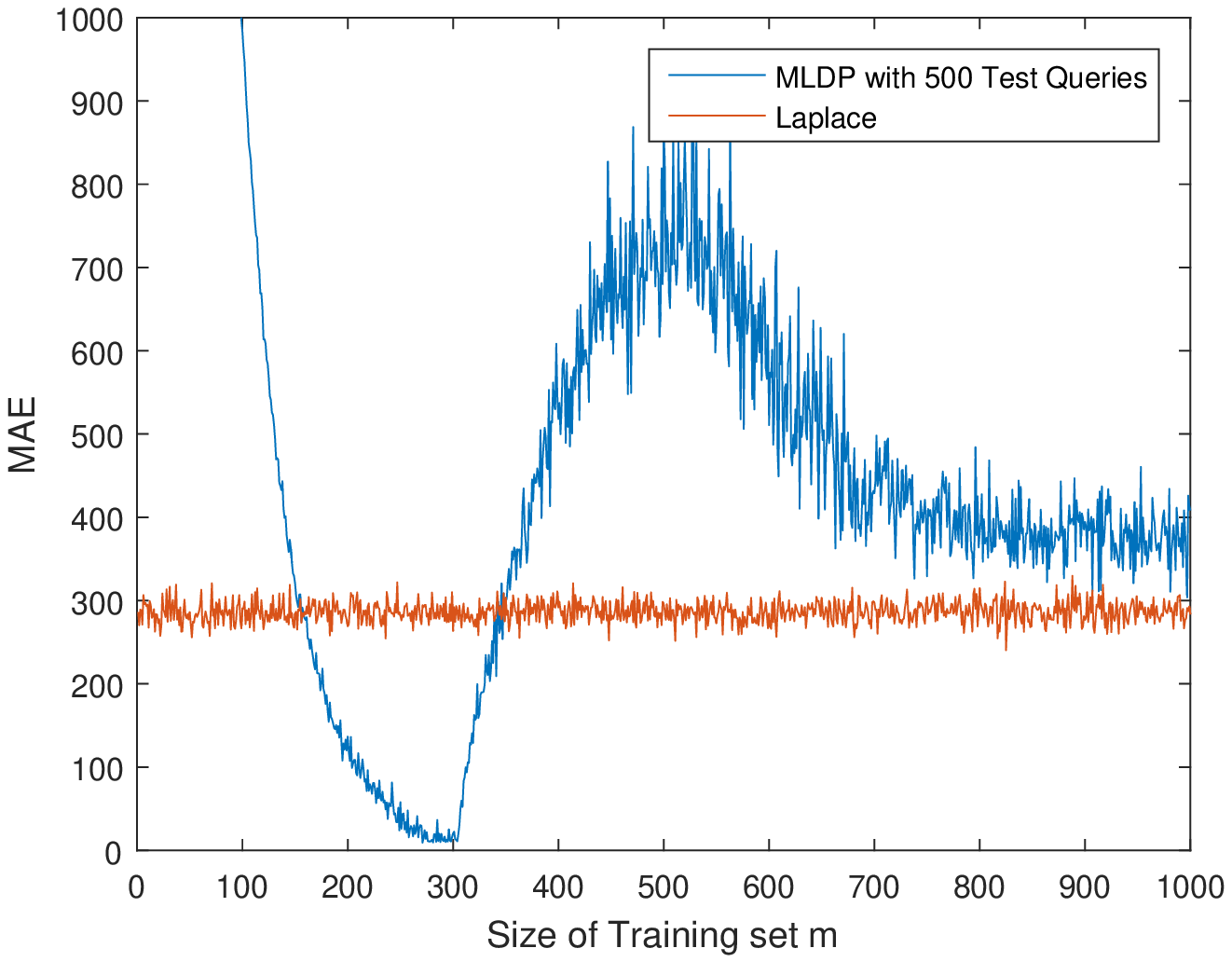}
}
\subfloat[\texttt{Simulated Histogram}]{
\label{Fig-Simu-train-100test}
\includegraphics[scale=0.29]{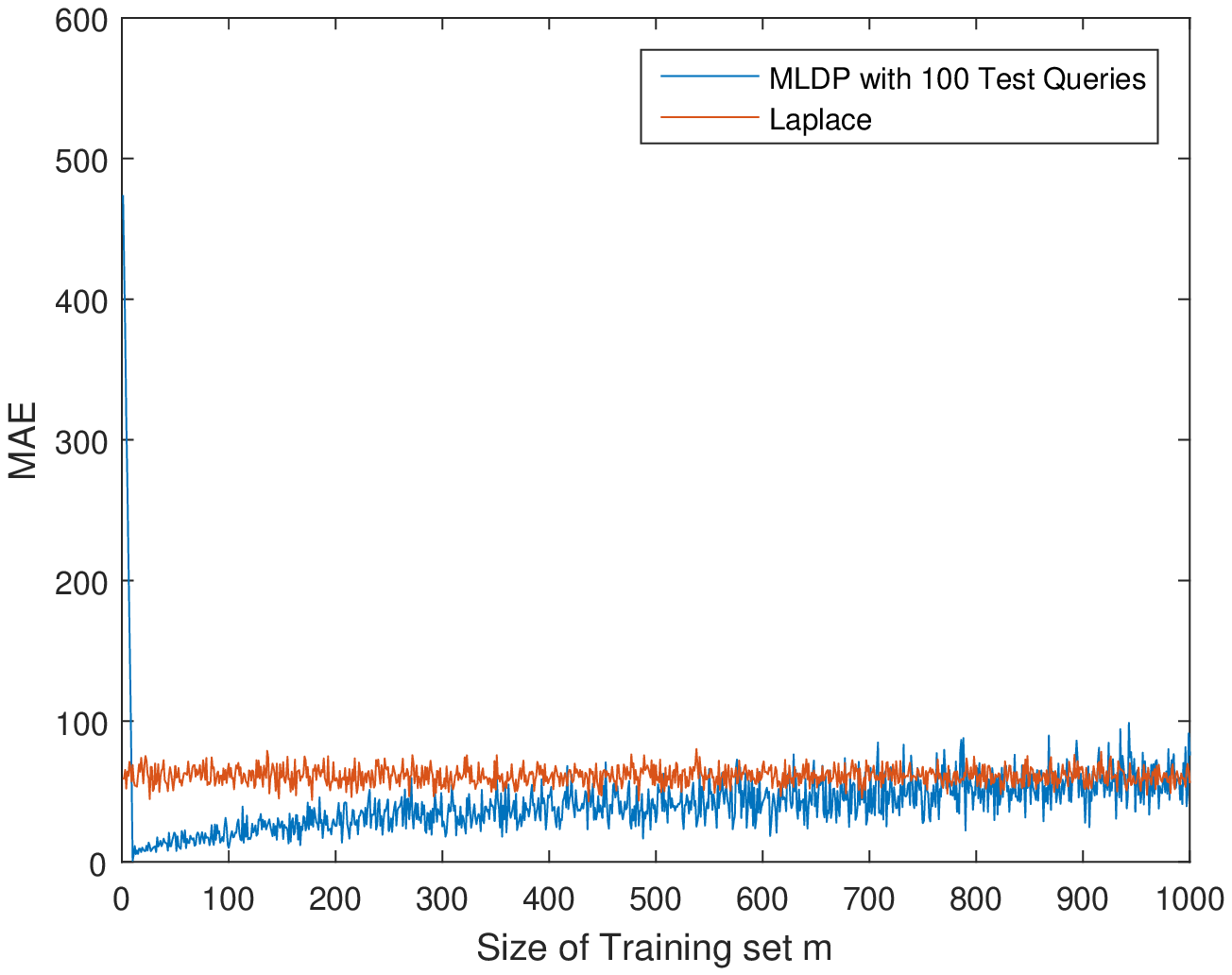}
}
\subfloat[\texttt{Simulated Histogram}]{
\label{Fig-Simu-train-500test}
\includegraphics[scale=0.29]{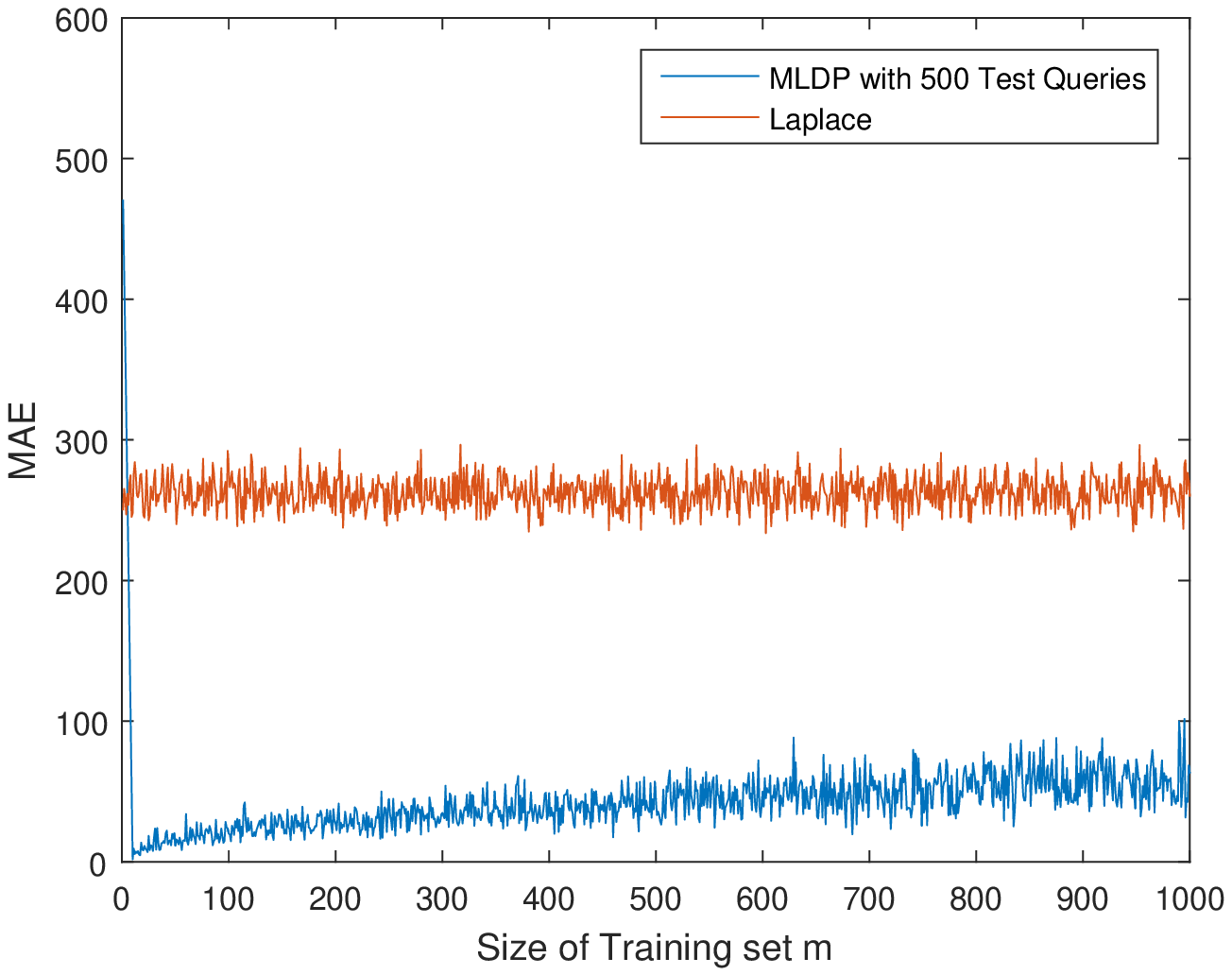}
}
\caption{Performance of \emph{MLDP} with different sizes of training set $\epsilon$}
\label{FIG-Train}
\end{figure*}

The theory analysis in Section $3$ indicates the size of the training set
plays a vital role in prediction.
This experiment examines various sizes of training set and
test set.
We set the size of training sets $m$ from $1$ to $1000$,
and set two different sizes of test query,
$m_{t}=100$ and $m_{t}=500$, respectively.
The $MAE$ result is compared with the traditional \emph{Laplace} method and the $\epsilon$ is fixed at $1$.

Fig.~\ref{FIG-Train} shows the impact of the size of the training set on the performance of \emph{MLDP}.
Initially,
it is apparent that \emph{MAE} drops quickly with the increase in $m$,
but when $m$ is larger than a certain value,
the \emph{MAE} reaches its minimum and
continues to increase.
As shown in Fig.~\ref{Fig-Nettrace-train-100test},
the \emph{MAE} keeps decreasing until $m=88$,
with $\emph{MAE}=20.3104$ at its lowest point.
As $m$ subsequently increases,
the \emph{MAE} keeps rising until $m=212$,
until it reaches $223.9952$.
The \emph{MAE} then goes down slowly as $m$ increases.
When it reaches another inflexion,
the \emph{MAE} slowly rises.
This result is consistent with Theorem~\ref{thm1} and~\ref{thm2} in Section~\ref{sec-utility},
in which the performance of \emph{MLDP} is impacted
by the mixture of noise error and model error.
The model error plays a dominate role when $m$ is small, so the \emph{MAE} decreases with $m$ increasing.
Beyond the threshold, however,
the noise error dominates the results.
As a larger $m$ introduces a larger volume of noise,
so the \emph{MAE} is enhanced.

Similarly, Fig.~\ref{Fig-Search-train-100test} shows that
the \emph{MAE} reaches its minimum when $m=227$.
Fig.~\ref{Fig-Social-train-100test} and Fig.~\ref{Fig-Simu-train-100test} show the same trend.
These results also confirm the theoretical analysis in sub-section~\ref{sec-utility}:
even a large $m$ can increase the accuracy of the model,
but too large a $m$ incurs a significant amount of noise in training sets and reduces the utility of the model.
These results illustrate the relationship between
$m$ and the performance of the \emph{MLDP},
which helps us to control the training set selection in \emph{MLDP}.

This set of experiments also compares the performance of \emph{MLDP}
with that of the \emph{Laplace} method in $m_{t}=100$ and $m_{t}=500$,
in which both methods will publish $100$ and $500$ queries in a batch.
Fig.~\ref{Fig-Nettrace-train-100test} and Fig~\ref{Fig-Nettrace-train-500test} shows that the
\emph{MAE}s of \emph{MLDP} in both figures are at the same,
which shows that the performance of \emph{MLDP} has not been impacted by the size of the test sets.
This is because when the model is fixed,
the prediction results will not change.
However,
the size of the test sets has great impact on the performance of the \emph{Laplace} method.
When $m_{t}=100$, the average \emph{MAE} is $61.8209$,
while the average \emph{MAE} increases to $275.5021$ when $m_{t}=500$.
This is because the total sensitivity increases with the enhancement of the
test set.
The noise added to each query is enlarged accordingly.
This trend is consistent in other datasets.
These results prove that when publishing large set of queries,
\emph{MLDP} significantly outperforms the traditional \emph{Laplace} method.

\subsection{Performance of MLDP with Different Sizes of Test Set}\label{sub-test}

\begin{figure*}[htbp]
\centering
\subfloat[\texttt{Nettrace}]{
\label{Fig-Nettrace-test-100train}
\includegraphics[scale=0.29]{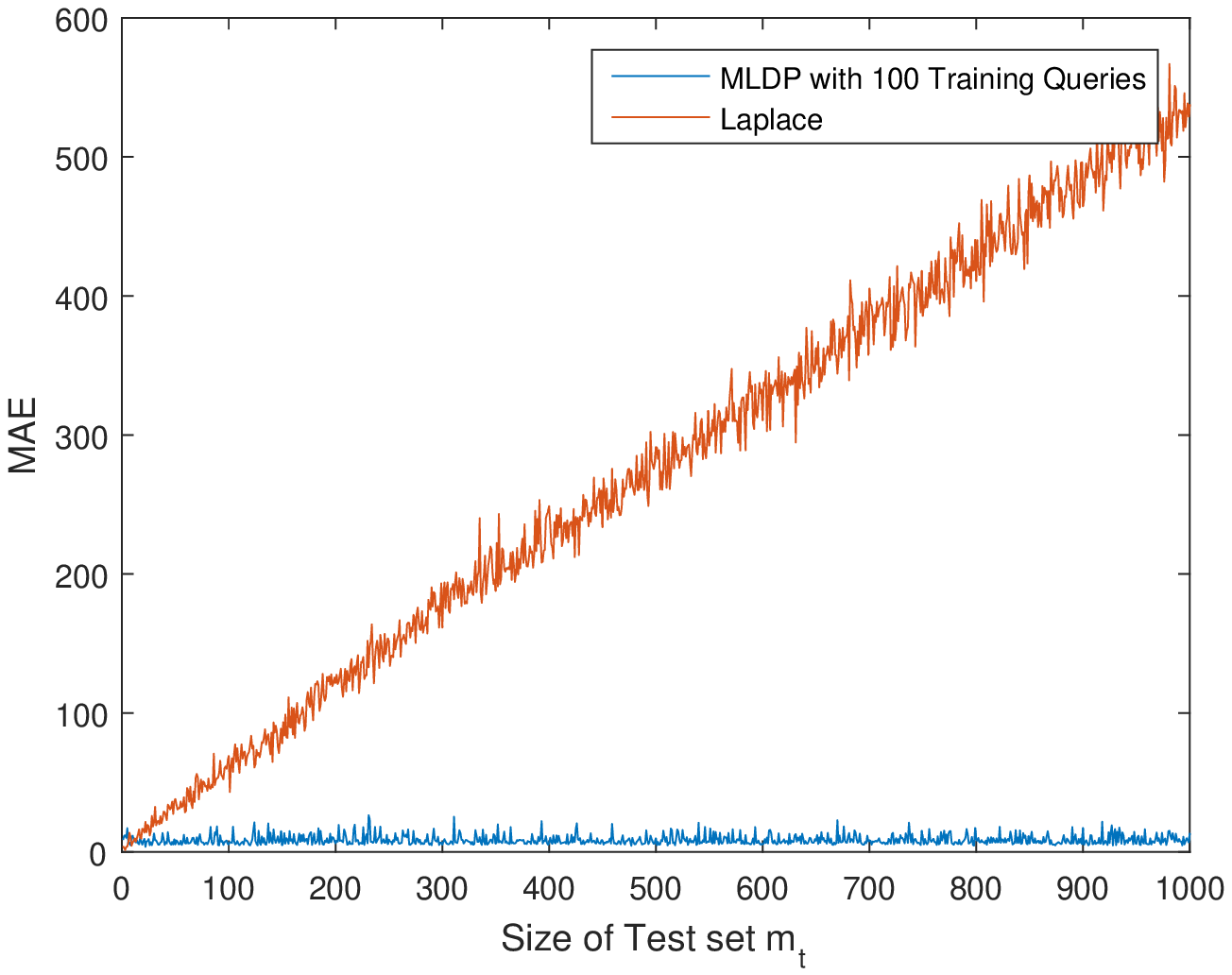}
}
\subfloat[\texttt{Nettrace}]{
\label{Fig-Nettrace-test-200train}
\includegraphics[scale=0.29]{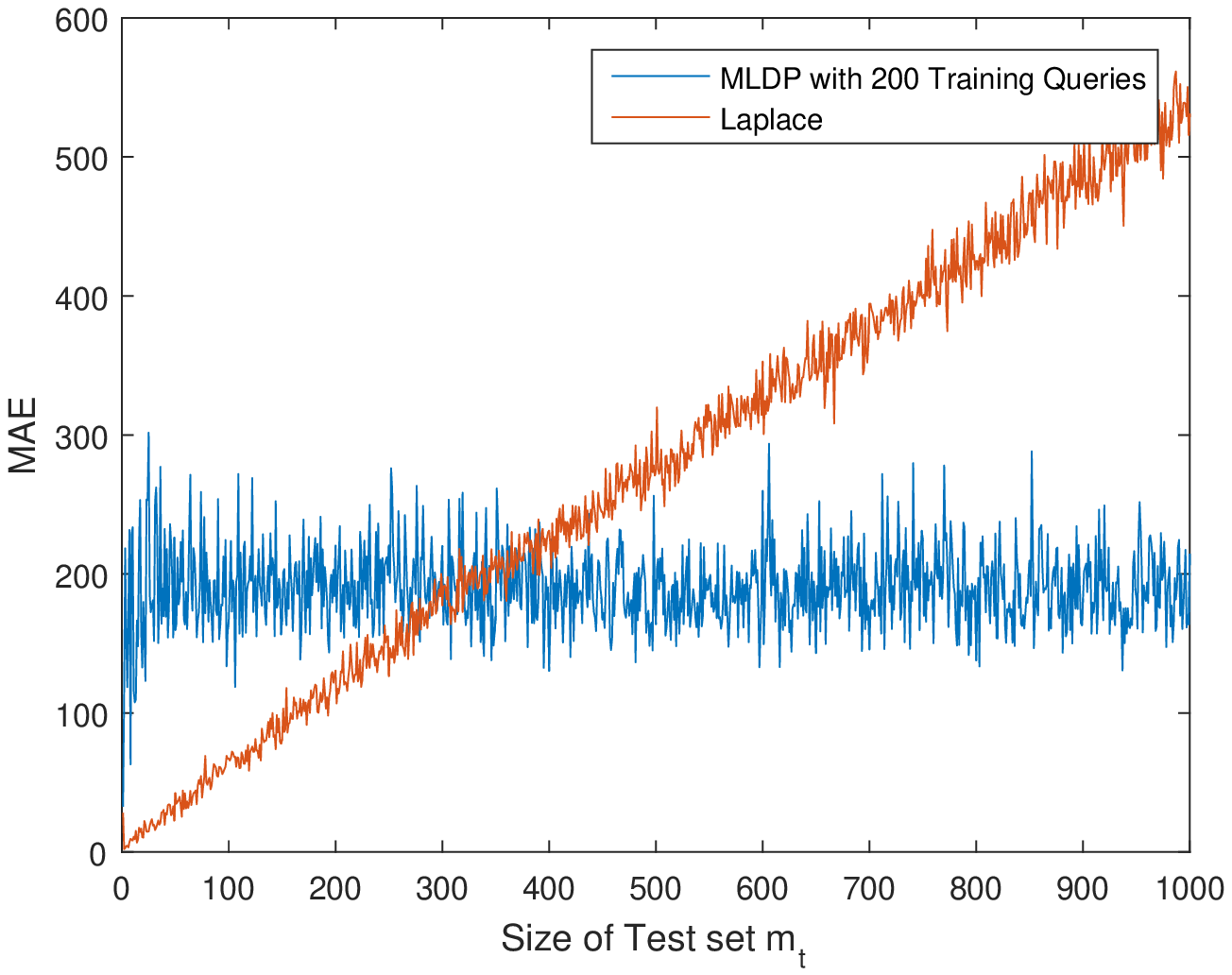}
}
\subfloat[\texttt{Search Log}]{
\label{Fig-Search-test-100train}
\includegraphics[scale=0.29]{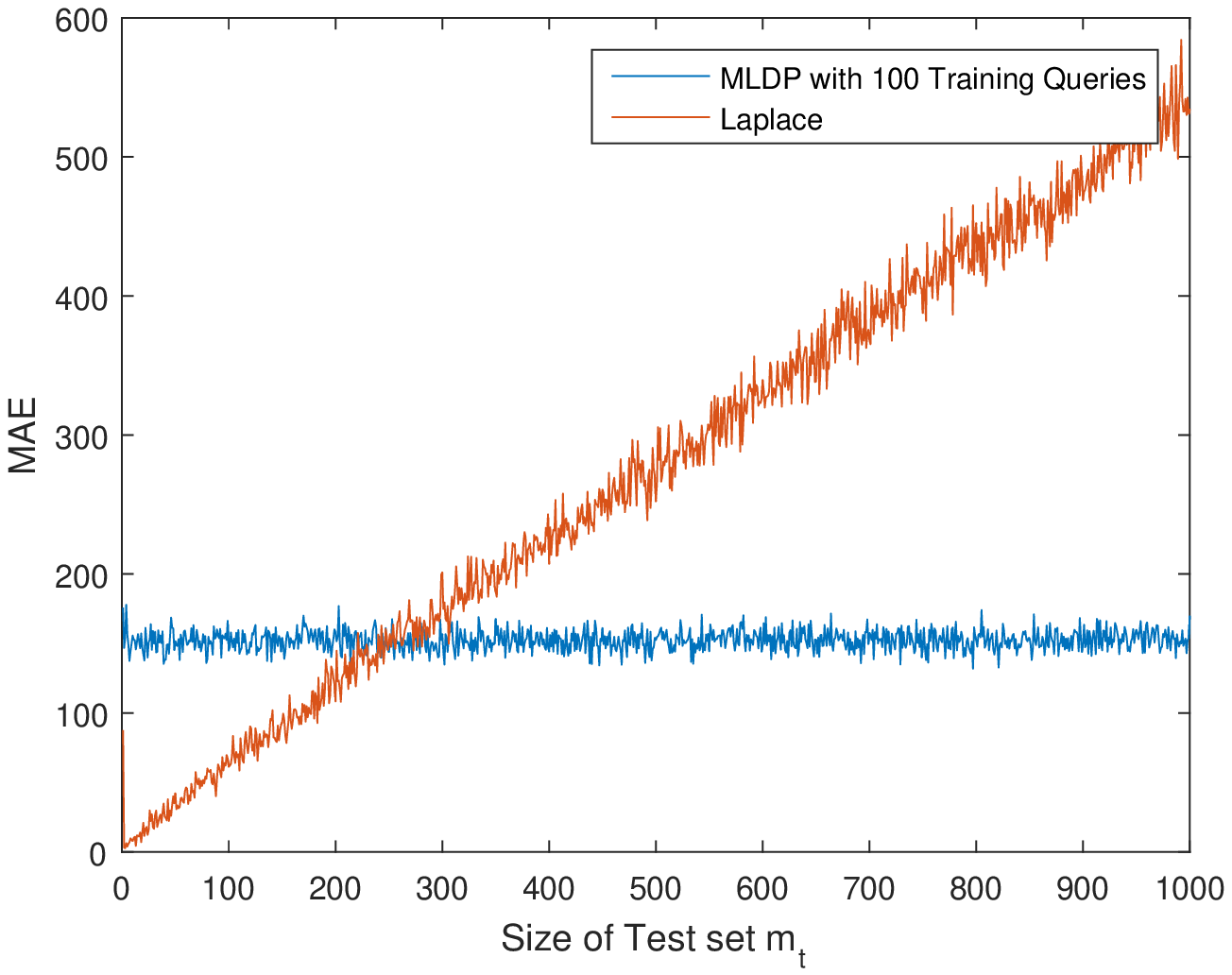}
}
\subfloat[\texttt{Search Log} with $500$ Test queries]{
\label{Fig-Search-test-200test}
\includegraphics[scale=0.29]{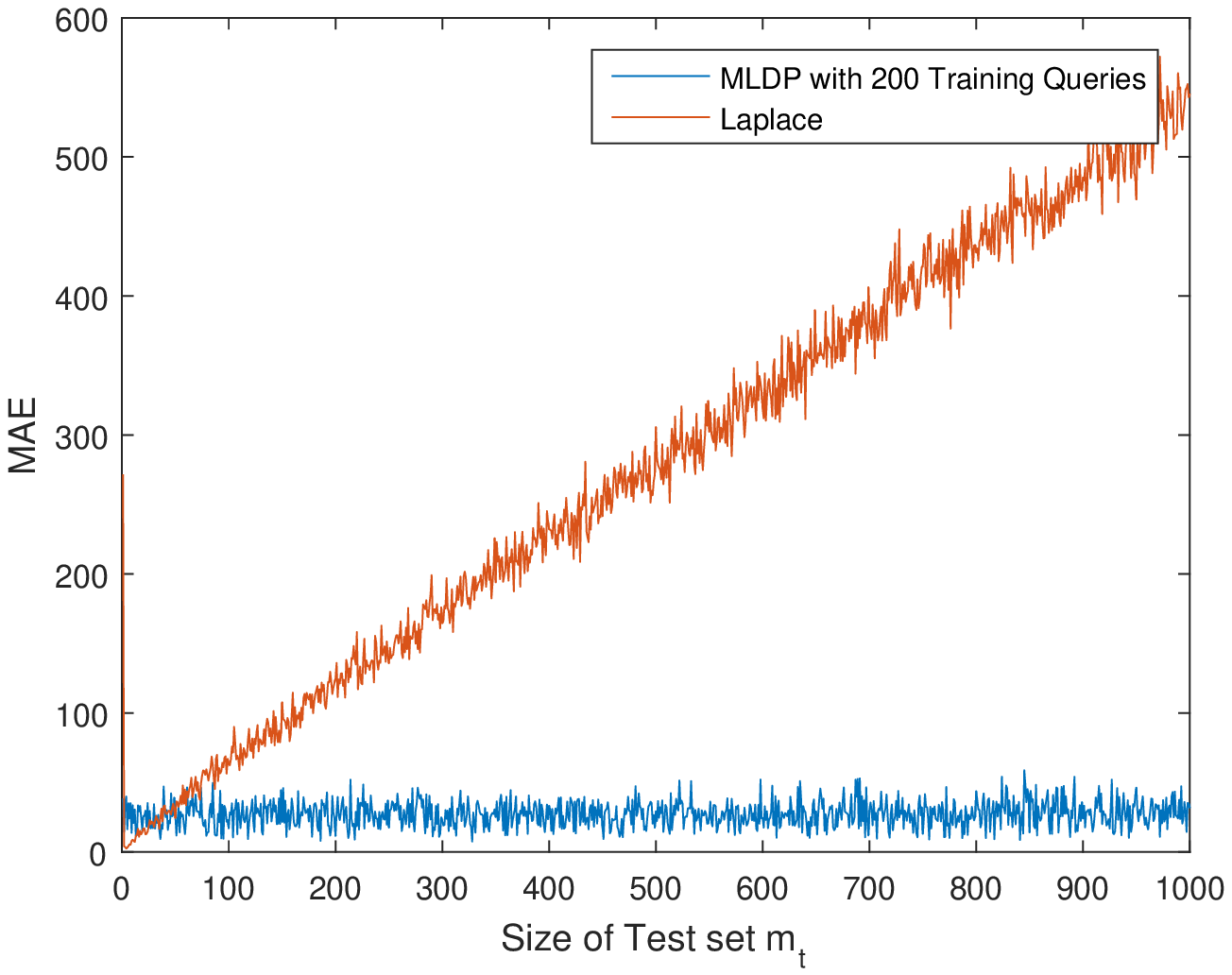}
}

\subfloat[\texttt{Social Network}]{
\label{Fig-Social-test-200train}
\includegraphics[scale=0.29]{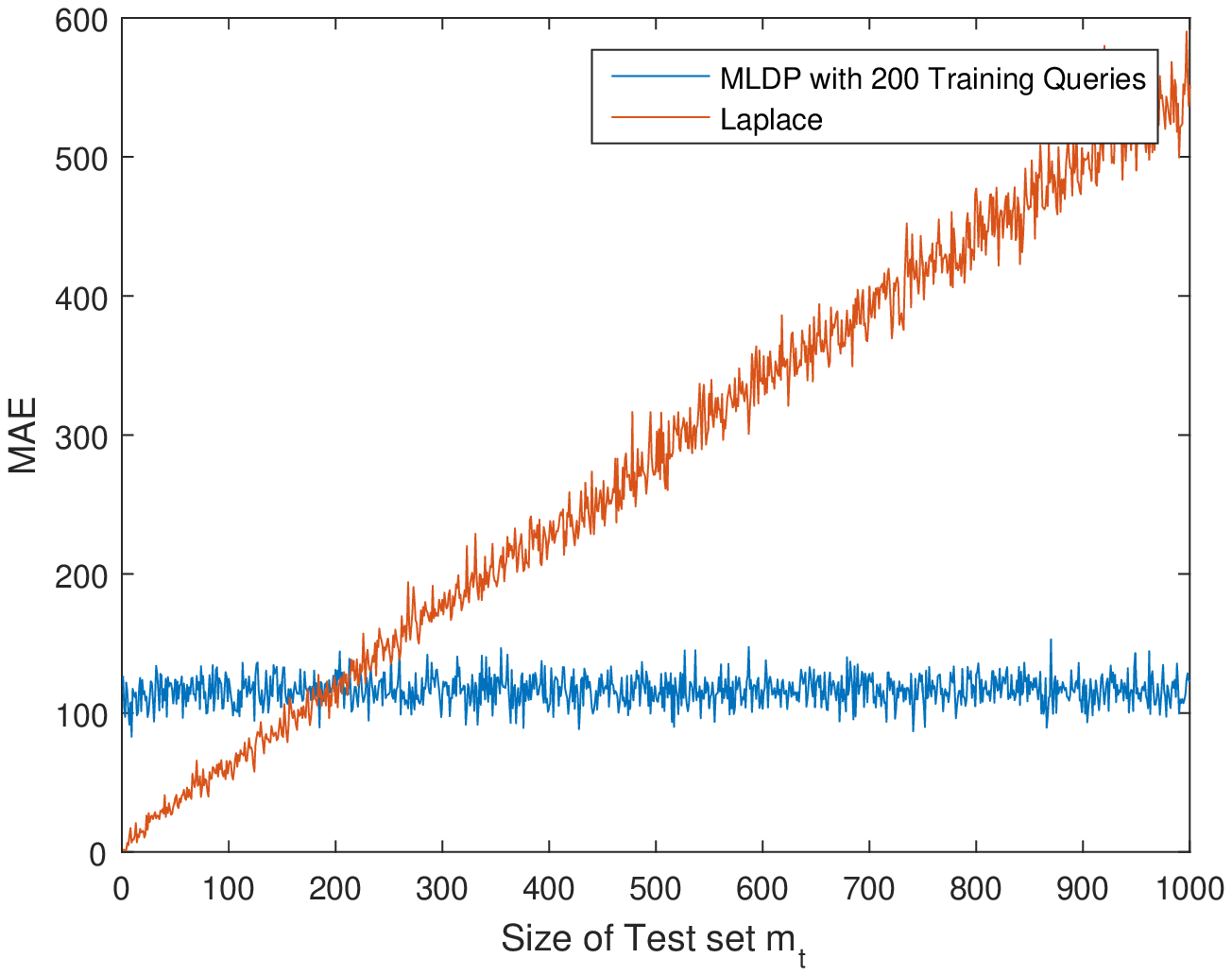}
}
\subfloat[\texttt{Social Network}]{
\label{Fig-Social-test-300train}
\includegraphics[scale=0.29]{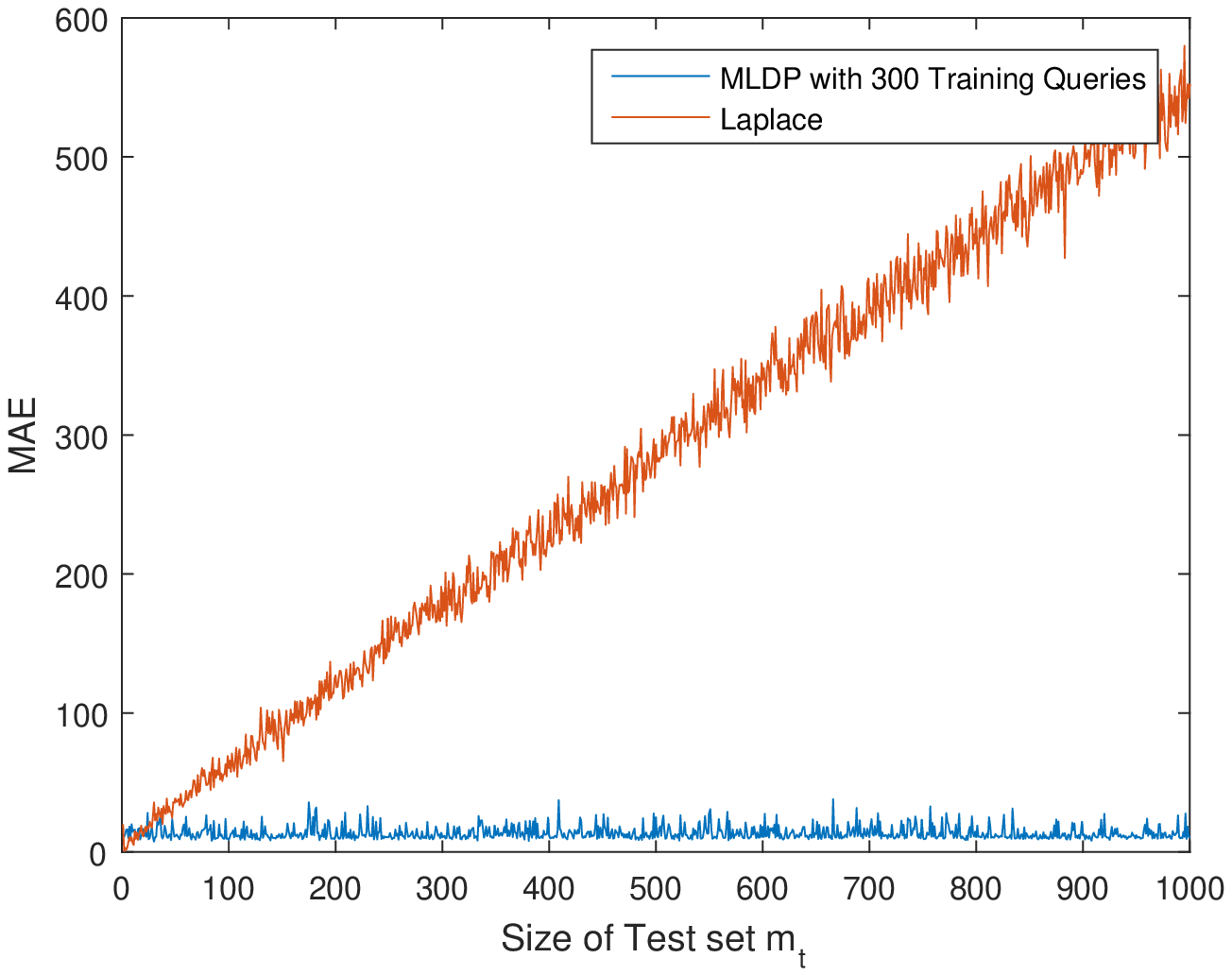}
}
\subfloat[\texttt{Simulated Histogram}]{
\label{Fig-Simu-test-100train}
\includegraphics[scale=0.29]{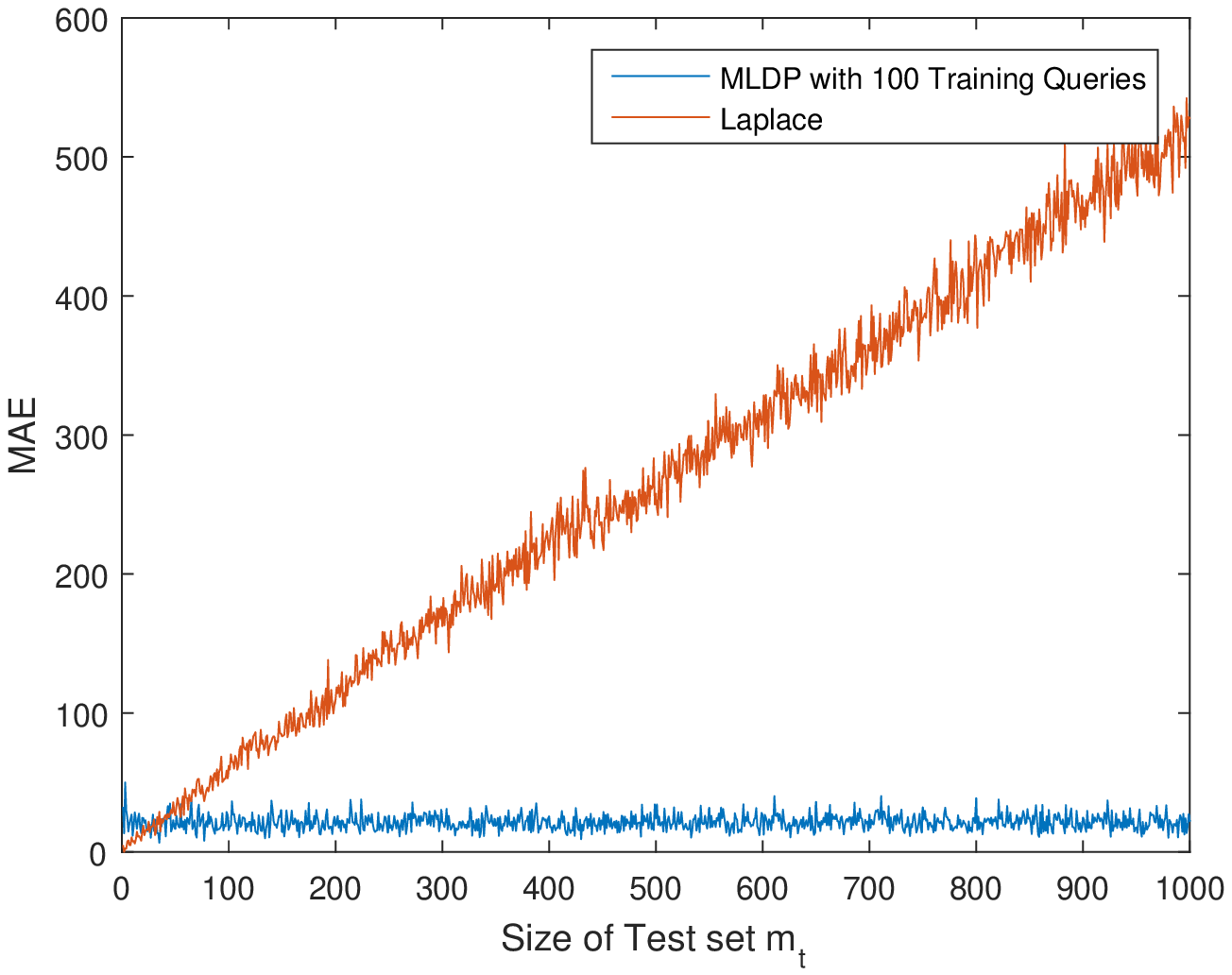}
}
\subfloat[\texttt{Simulated Histogram}]{
\label{Fig-Simu-test-200train}
\includegraphics[scale=0.29]{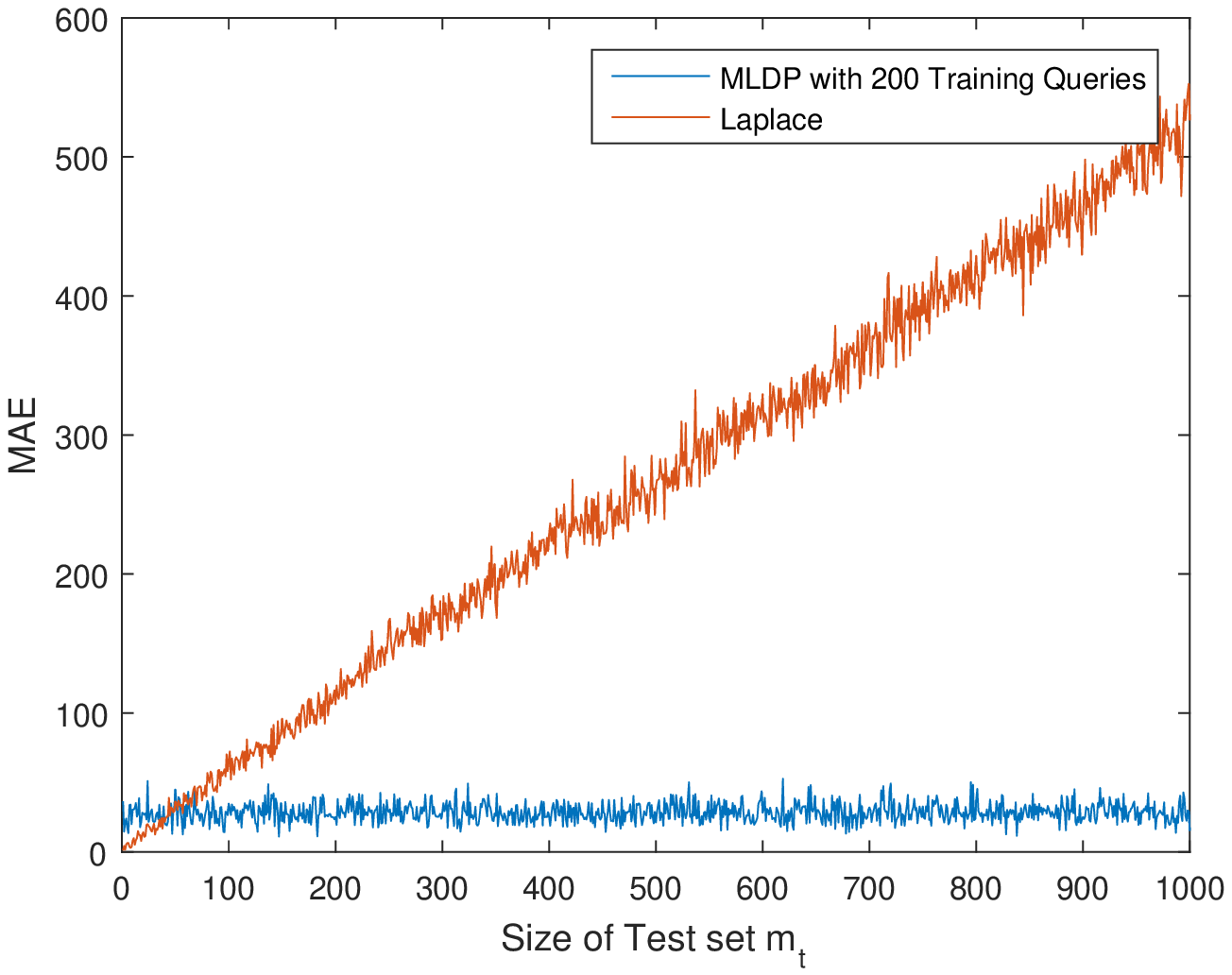}
}
\caption{Performance of \emph{MLDP} in different sizes of test set $\epsilon$}
\label{FIG-Test}
\end{figure*}

The size of the test set indicates the number of queries that a method needs to answer.
This set of experiments examines the performance of $MLDP$ in various sizes of test set,
in which $m_{t}$ is varied from $1$ to $1000$.
As the best size of training set varies with different datasets,
we set $m$ to $100$, $200$ and $300$.
The $MAE$ result is compared with the traditional \emph{Laplace} method and the $\epsilon$ is fixed at $1$.

Fig.~\ref{FIG-Test} shows the impact of the size of test set on
the performance of the \emph{MLDP} and the \emph{Laplace} method.
For all datasets,
the \emph{MAE}s of \emph{MLDP} remain stable with increases in the test set.
However,
the \emph{MAE}s of the \emph{Laplace} method increase linearly
with the enhancement of the test set.
This is because the total sensitivity increases linearly
with the growth in test set size.
When the privacy budget is fixed,
the volume of noise added to the test query answers is raised linearly.

We can also observe in Fig.~\ref{FIG-Test} that
when the size of the test set is small,
the \emph{MAE} of \emph{MLDP} is larger than the \emph{MAE} of the \emph{Laplace} method.
When the size is increased,
\emph{MLDP} outperforms the \emph{Laplace} method.
For example,
Fig.~\ref{Fig-Nettrace-test-100train} shows that when $m_{t}<30$,
\emph{MLDP} has a higher \emph{MAE} than the \emph{Laplace} method.
In Fig.~\ref{Fig-Nettrace-test-200train},
we observe that when the size of the training set is $200$,
meaning that the model is less accurate than that in Fig.~\ref{Fig-Nettrace-test-100train},
\emph{MLDP} has a higher \emph{MAE} than the \emph{Laplace} method until $m_{t}=300$.
This result means that \emph{MLDP} will be more suitable for publishing large set of queries.
In other circumstances,
\emph{MLDP} may demonstrate worse performance than the traditional \emph{Laplace} method.

This trend can also be observed in other datasets.
Fig.~\ref{Fig-Search-test-100train} shows that
\emph{MLDP} performs better on the Search log dataset when
the size of test set reaches $250$.
However,
if the model can be trained with $200$ training queries,
the prediction model will be improved significantly.
Fig.~\ref{Fig-Search-test-100train} shows the result
when the size of the training set is increased to $200$.
\emph{MLDP} outperforms the \emph{Laplace} method when $m>30$.
From Fig.~\ref{Fig-Social-test-200train}, Fig.~\ref{Fig-Social-test-300train},
Fig.~\ref{Fig-Simu-test-100train} and Fig.~\ref{Fig-Simu-test-200train},
we can conclude that when publishing a large set of queries,
\emph{MLDP} is more suitable than the \emph{Laplace} method.
However,
if a single query is being published,
\emph{MLDP} may not outperform the \emph{Laplace} method as
the sensitivity of the single query is relatively lower.
The \emph{Laplace} method will introduce a lower volume of noise,
while \emph{MLDP} may result in more model errors.

\subsection{Performance in Different Learning Algorithms}\label{sub-alg}

Apart from the traditional \emph{Laplace} mechanism,
we also compare \emph{MLDP} with other two prevalent methods.
The first is \emph{Matrix}~\cite{Li2013ICDT}, which aims to decrease the correlation between batches of queries.
The second is \emph{PMW}~\cite{Hardt201061}.
It is one of the most populate iterative publishing
methods in the differential privacy community.
As both algorithms can only deal with count or range queries,
we only test range queries with $\epsilon$ from $0.1$ to $1$.

As shown in Fig.~\ref{FIG-Comp},
we observe that \emph{MLDP} has a lower \emph{MAE} on all values of $\epsilon$ in all datasets.
For example,
in Fig.~\ref{Fig-Nettrace-Comp},
when $\epsilon=0.3$,
\emph{MLDP} has an \emph{MAE} of $26.5437$ while the \emph{Laplace} method has $106.3405$,
with an improvement of $79.7968$.
When $\epsilon = 1$,
\emph{MLDP} achieves an \emph{MAE} of $8.4615$
and outperforms \emph{Laplace} which has $32.2350$.
These results imply that
\emph{MLDP} outperforms traditional \emph{Laplace} when answering a large set of queries.
When compared with other methods,
\emph{MLDP} still has lower \emph{MAEs}.
When $\epsilon=0.3$,
the \emph{PMW} has a \emph{MAE} of $72.0941$ and \emph{Matrix} has a \emph{MAE} of $65.8152$,
which is higher than \emph{MLDP}.
This trend is consistent with the increase in $\epsilon$.
When $\epsilon$ reaches $1$,
\emph{PMW} is $22.3119$ and \emph{Matrix} is $19.3381$,
both of which are higher than that of \emph{MLDP} with \emph{MAE}$=8.4615$.

The improvement achieved by \emph{MLDP} can also be observed in Fig.~\ref{Fig-Search-Comp},~\ref{Fig-Social-Comp}, and Fig.~\ref{Fig-Simu-Comp}.
The proposed \emph{MLDP} mechanism has better performance because
the prediction process for answering test queries does not consume any privacy budget,
while noise is only added in the training queries.
The traditional \emph{Laplace} method consumes the privacy budget
when answering every query in the test set, and the sensitivity
is affected by the correlation between large sets of queries,
which leads to inaccurate answers.
The experimental results show the effectiveness of \emph{MLDP}
in answering a large set of queries.
It is also worth noting that this test set is unknown by \emph{MLDP} in the training process,
but for \emph{PMW}, \emph{Matrix} and \emph{Laplace},
it should be provided before publishing.
This shows that \emph{MLDP} can deal with unknown queries, while other methods cannot.

In the context of \emph{differential privacy},
the privacy budget $\epsilon$ is a key parameter for determining the level of privacy.
From Fig.~\ref{FIG-Comp},
we can also check the impact of $\epsilon$ on the performance of \emph{MLDP}.
According to Dwork~\cite{Dwork1791836},
$\epsilon = 1$ or less would be suitable for privacy preservation purposes,
and we follow this rule in our experiments.
For a comprehensive investigation,
we evaluate \emph{MLDP}'s performance at various privacy preservation levels,
by varying the privacy budget $\epsilon$ from $0.1$ to $1$
with a $0.1$ step on four datasets.
It is observed that
as $\epsilon$ increases,
the \emph{MAE} evaluation becomes better,
which means that
the lower the privacy preservation level,
the better the utility.
In Fig.~\ref{Fig-Search-Comp},
the \emph{MAE} of \emph{MLDP} is $119.2693$ when $\epsilon=0.1$.
Even though it preserves a strict privacy guarantee,
the query answer is inaccurate.
When $\epsilon=0.7$,
the \emph{MAE} drops to $18.9745$,
retaining an acceptable utility in the result.
The same trend can be observed on other datasets.
For example,
when $\epsilon=0.7$,
the \emph{MAE} is $45.3686$ in Fig.~\ref{Fig-Social-Comp},
and is $34.7118$ in Fig.~\ref{Fig-Simu-Comp}.
Both show great improvement compared to $\epsilon=0.1$.
These results confirm that
the utility is enhanced as the privacy budget increases.

We observe that the
\emph{MAE} decreases faster when $\epsilon$ ascends from $0.1$ to $0.4$,
than when $\epsilon$ ascends from $0.4$ to $1$.
This indicates that a larger utility cost is needed
to achieve a higher privacy level ($\epsilon=0.1$).
We also observe that
\emph{MLDP} and other methods perform stably
when $\epsilon \geq 0.7$.
This indicates that
\emph{MLDP} is capable of retaining the utility for data release
while satisfying a suitable privacy preservation requirement.

The evaluation shows the effectiveness of the \emph{MLDP} method from several aspects.
1) It retains a higher accuracy compared to other methods when answering large sets of queries.
2) Its performance is significantly enhanced
with the increase in the privacy budget.
We can select a suitable privacy budget to achieve a better trade-off.
3) With a sufficient privacy budget,
the utility loss can be trivial.

\begin{figure}[htbp]
\centering
\subfloat[\texttt{Nettrace}]{
\label{Fig-Nettrace-Comp}
\includegraphics[scale=0.29]{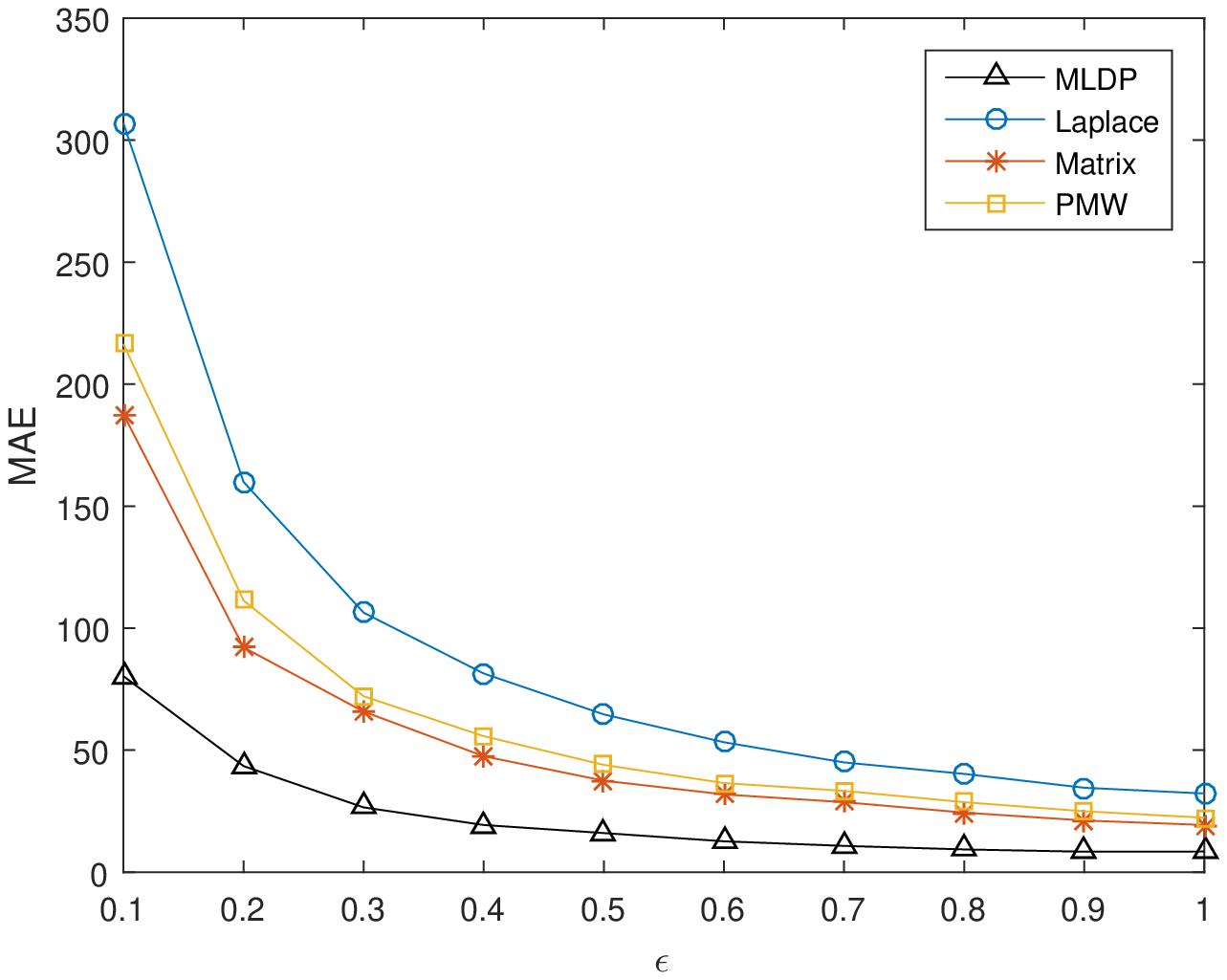}
}
\subfloat[\texttt{Search Log}]{
\label{Fig-Search-Comp}
\includegraphics[scale=0.29]{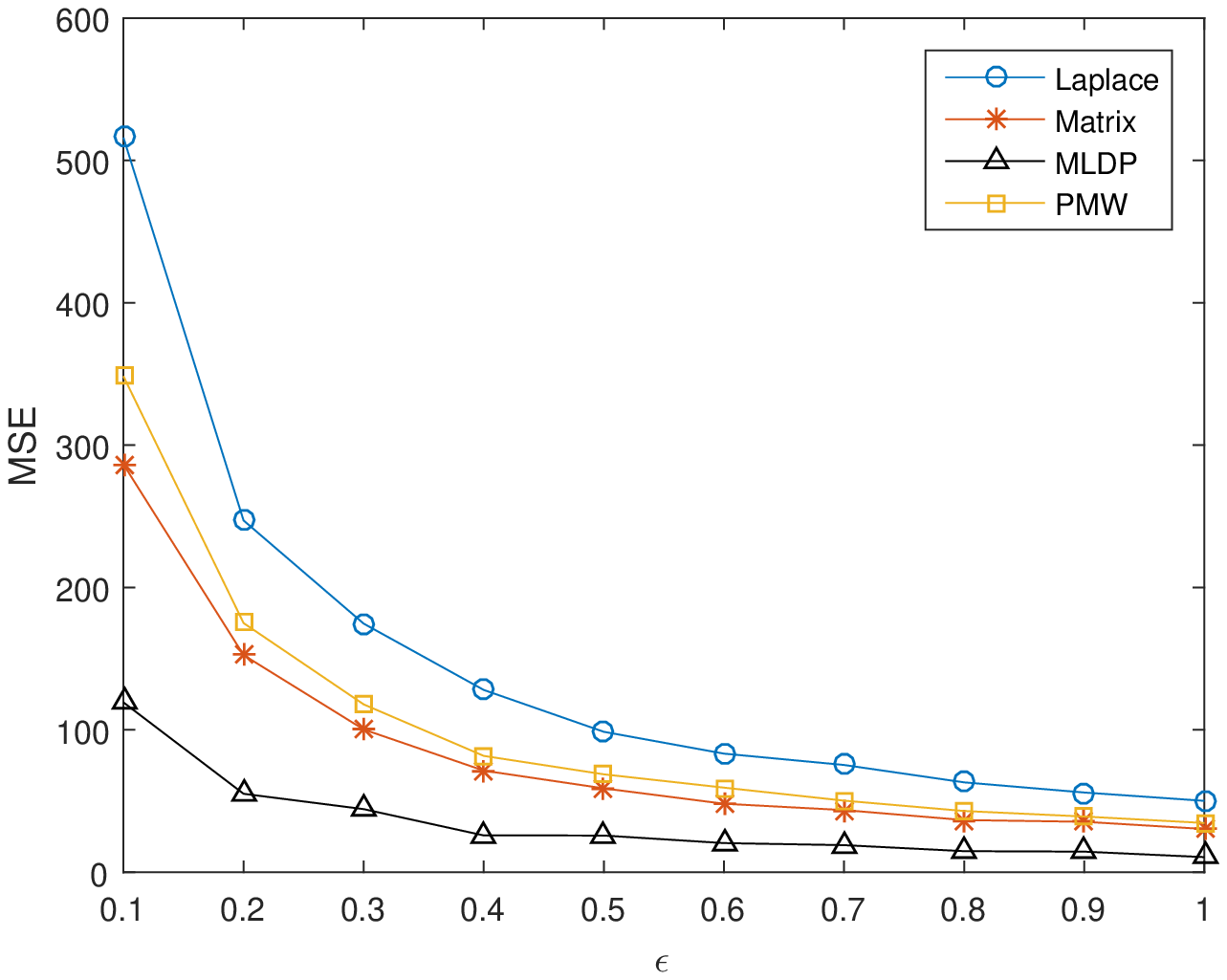}
}

\subfloat[\texttt{Social Network}]{
\label{Fig-Social-Comp}
\includegraphics[scale=0.29]{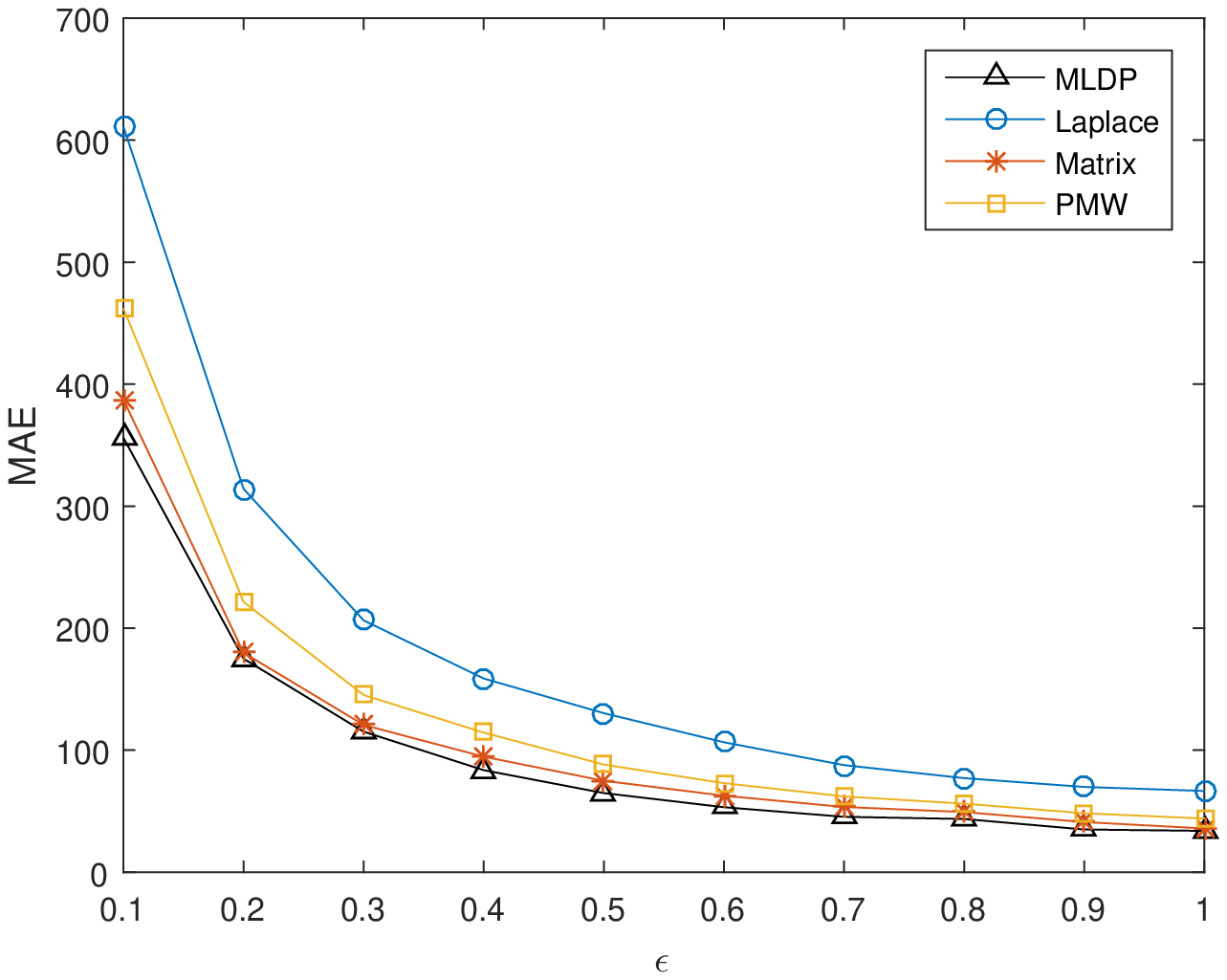}
}
\subfloat[\texttt{Simulated Histogram}]{
\label{Fig-Simu-Comp}
\includegraphics[scale=0.29]{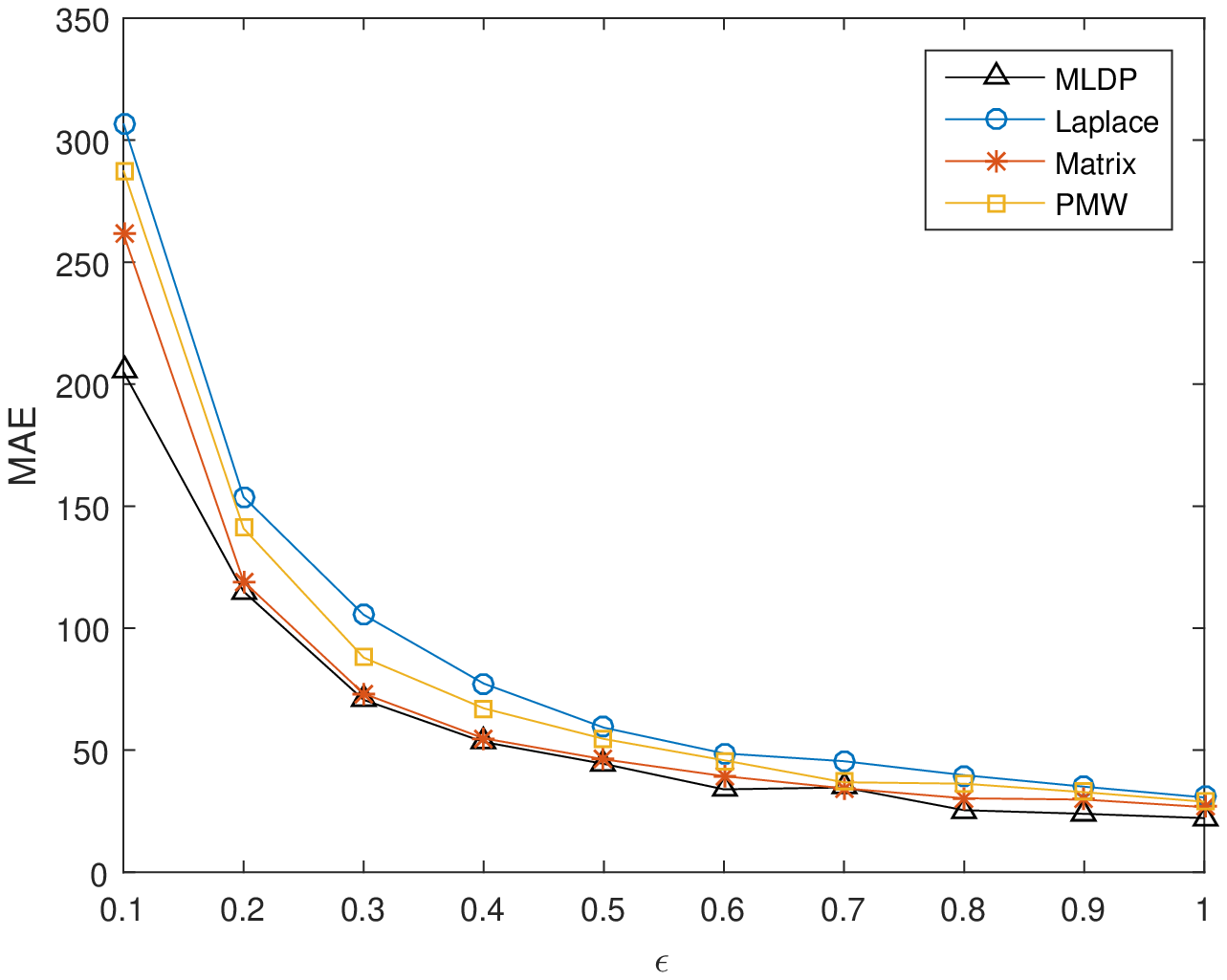}
}
\caption{Method Comparison under different $\epsilon$}
\label{FIG-Comp}
\end{figure}

\section{Related Work}

A plethora of methods has been proposed for differentially private data publishing.
Among them,
two different types of method exist that preserve differential privacy
for \emph{non-interactive} data publishing.
One type of method is synthetic dataset publishing another type is the batch queries publishing.

Synthetic dataset publishing attempts to publish a perturbed dataset instead of the original one.
Mohammed~{et~al.}~\cite{Mohammed2011KDD} proposed an anonymized algorithm \emph{DiffGen}
to preserve privacy for data mining purposes.
The anonymization process satisfies the constraint of differential privacy.
Zhang~{et~al.}~\cite{Zhang14SIGMOD} assumed that there are correlations between attributes.
If these correlations can be modeled,
the model can be used to generate a set of marginals to
simulate the distribution of the original dataset.
Chen~{et~al.}~\cite{Chen15KDD} addressed the similar problem
by proposing a clustering generated method.
All of above works are focusing on the dataset perturbation, there is another line of works
concerning on the sampling method based on the learning theory.

With the learning theory development,
Kasiviswanathan~\cite{Kasiviswanathan2008531} claimed that
almost anything learnable can be learned privately.
Blum~{et~al.}~\cite{Blum2008LTA} subsequently claimed that
the main purpose of analyzing a dataset is to
obtain information about a certain concept.
Based on their theories,
Kasiviswanathan~{et~al.}~\cite{Kasiviswanathan2008531}
proposed an \emph{Exponential}-based mechanism to search a synthetic dataset from the data universe
that is able to accurately answer a group of queries.
Blum~{et~al.}~\cite{Blum2008LTA} applied a similar \emph{Exponential}-based mechanism,
\emph{Net} mechanism,
to generate a synthetic dataset over a discrete domain.


Another type of method is to release a batch of queries instead of a dataset.
The traditional \emph{Laplace} method belongs to this type,
but it introduces a large amount of noise due to the correlation between queries.
Current research works focus on how to decrease the correlation between batches of queries,
so that the total sensitivity can be diminished.

Xiao~{et~al.}~\cite{Xiao2011} proposed a wavelet transformation,
called \emph{Privelet},
on the dataset to decrease the sensitivity.
Li~{et~al.}~\cite{Li2013ICDT} proposed the \emph{Matrix} mechanism which
answer sets of linear counting queries.
Given a set of queries,
the \emph{Matrix} mechanism defines a workload $A$ accordingly
and obtains noisy answers by implementing the \emph{Laplace} mechanism.
The estimates are then used on the $A$
to generate estimates of the submitted queries.
Huang~{et~al.}~\cite{Huang2015DB} transformed the query sets to
a set of orthogonal queries to reduce the correlation between queries.
The correlation reduction helps to decrease the sensitivity of the query set.
Yuan~{et~al.}~\cite{Yuan2015batch} presented a low-rank mechanism (LRM),
an optimization framework that minimizes the overall error of the results
for a batch of linear queries.

The method in this paper is neither similar to synthetic dataset publishing, nor to batch query publishing.
The proposed method aims to publish a model rather than a synthetic dataset or query answers.
Unlike previous work,
our work aims to publish a model to answer fresh queries.
It is entirely new thinking on the data publishing problem
which transfers data publishing into a machine learning process.
Many challenges in non-interactive data publishing can thus be overcome by
using existing flourishing machine learning theories.

\section{Conclusions} \label{sec-conclusions}

Differential privacy is an influential notion in the research of privacy preserving data publishing,
but the
existing differentially private method fails to provide
accurate results for publishing large numbers of queries.
Two challenges must be tackled in this process:
how to decrease the correlation between queries and how to deal with
unknown queries before publishing.
This paper proposes a query learning solution to deal
with both challenges and makes the
following contributions:
We propose a novel \emph{MLDP} method to transfer the
data publishing problem to a machine learning problem and
prove the accuracy bound of the \emph{MLDP}.
The \emph{MLDP} method exploits a possible way to publish data structures.
Extensive experiments has been used on both real and synthetic datasets
to prove the effectiveness of the proposed \emph{MLDP}.
These contributions not only form a practical solution for
non-interactive data publishing in terms of higher accuracy,
but also propose a possible way to release various types of data in the future.


\end{document}